\theoremstyle{break}
\newtheorem{Def}{Definition}[section]
\newtheorem{rem}[Def]{Remark}
\newtheorem{lem}[Def]{Lemma}
\newtheorem{prop}[Def]{Proposition}
\newtheorem{cor}[Def]{Corollary}
\newtheorem{thm}[Def]{Theorem}
\newtheorem{ex}[Def]{Example}
\newtheorem{nt}[Def]{Notation}
\newcommand*{\rom}[1]{\expandafter\@slowromancap\romannumeral #1@} % roman typographie
\newcommand{\ir}{r} % Interest rate
\renewcommand{\phi}{\varphi} % nicer phi
\newcommand{\E}{\mathbb{E}} % expectation
\newcommand{\R}{\mathbb{R}} % real numbers
\newcommand{\N}{\mathbb{N}} % natural numbers
\newcommand{\mE}{\E} % same as before
\newcommand{\I}{\mathcal J} % Inductive set
\newcommand{\md}{\;{\rm d}} % differential operator
\newcommand{\one}{1\mkern-5mu{\hbox{\rm I}}} % indicator function
\newenvironment{proof}{\noindent{\textit{Proof:}}}{%
\unskip\nobreak\hfil\penalty50\hskip1em\null\nobreak\hfil$\Box$
  \parfillskip=\z@\finalhyphendemerits=0\endgraf\bigskip}
\let\oldendBsp\endBsp
\def\endBsp{\unskip\nobreak\hfil\penalty50\hskip1em\null\nobreak\hfil%
$\blacksquare$\parfillskip=\z@\finalhyphendemerits=0\endgraf\oldendBsp}
\author{Julia Eisenberg \thanks{corresponding author: \small {\tt jeisenbe@fam.tuwien.ac.at}}\qquad 
Paul Kr\"uhner \bigskip\\Institute for Statistics and Mathematical Methods in Economics\smallskip\\TU Wien, Austria}
\date{}
\title{The Impact of Negative Interest Rates on Optimal Capital Injections}
\begin{document}
\maketitle
%\author[Eisenberg]{Julia Eisenberg}
%\address[Julia Eisenberg]{\\
%Department of Mathematics \\
%FAM - Financial and Actuarial mathematics\\
%TU Vienna\\
%Wiedner Hauptstr. 8-10\\
%1040 Vienna}
%\email[]{jeisenbe\@@fam.tuwien.ac.at}
%\author[Kr\"uhner]{Paul Kr\"uhner}
%\address[Paul Kr\"uhner]{\\
%Department of Mathematics \\
%FAM - Financial and Actuarial mathematics\\
%TU Vienna\\
%Wiedner Hauptstr. 8-10\\
%1040 Vienna}
%\email[]{paulkrue\@@fam.tuwien.ac.at}

%\keywords{negative interest rate, optimal stochastic control, }
%\subjclass[2010]{}

%\thanks{We would like to thank our Schatzis for the time they allowed us to spend together.}

\begin{abstract}\noindent
In the present paper, we investigate the optimal capital injection behaviour of an insurance company if the interest rate is allowed to become negative. The surplus process of the considered insurance entity is assumed to follow a Brownian motion with drift. The changes in the interest rate are described via a Markov-switching process. It turns out that in times with a positive rate, it is optimal to inject capital only if the company becomes insolvent. However, if the rate is negative it might be optimal to hold a strictly positive reserve. We establish an algorithm for finding the value function and the optimal strategy, which is proved to be of barrier type. Using the iteration argument, we show that the value function solves the Hamilton--Jacobi--Bellman equation, corresponding to the problem. 
\vspace{6pt}

\noindent
{\bf Key words:} negative interest rate, capital injections, Markov-switching, optimal stochastic control, Hamilton--Jacobi--Bellman equation.
\settowidth\labelwidth{{\it JEL Subject Classification: }}%
                \par\vspace{6pt}           
\noindent {\it JEL Subject Classification: }%
                C61, G22
\settowidth\labelwidth{{\it 2010 Mathematical Subject Classification: }}%
                \par\vspace{6pt}           
\noindent {\it 2010 Mathematical Subject Classification: }%
                \rlap{Primary}\phantom{Secondary}
                93E20;\newline\null\hskip\labelwidth
                Secondary 49L20, 91B30
\end{abstract}

\section{Introduction}
On the 16th of March 2016 the European Central Bank (ECB) set the key interest rate on $0\%$. The deposit facility rate (currently $-0.4\%$) remains negative since the 11th of June 2014, confer \cite{ecb}.
It means, that instead of getting paid for depositing money into the central bank, one has to pay the central bank for it. Also, the yields on government bonds are currently close to their historical minimum. For instance, the yield on the 10-year German government bond, considered one of the safest assets in the world, sank below zero in June 2016 for the first time ever.
\\But why would anyone buy a government bond, lacking annual payments and bringing back less than the amount invested? One reason is the deficit of alternative safe opportunities. Of course, a large corporation can hire guards in order to protect its cash. But doubtless, using bank services is safer and cheaper even in times of negative interest rates. 

Since, insurance companies run massive portfolios of bonds, the changes in the interest rates could be crucial for their balance sheets. Intuitively, it is clear that ultra-low interest rates immensely affect the life insurance sector: the long-term promises to policyholders, made decades ago, imply a much higher interest rate and cause mismatches between assets and liabilities.   
\\But do negative interest rates affect the value of a non-life insurance company?
Typically, one assumes that there is little impact because most policies are short-termed, implying that the assets and liabilities can be properly matched. However, this perspective neglects the value of future business potential, for instance future premia (competitive  markets), dividends (profitability) or capital injections (Solvency II capital requirements).
\\Indeed, non-life insurance premia should be based on the premise of appropriate pricing and give a ``forecast'' on profitability and possible dividend payments. 
Therefore, the premia are highly dependent on the economic markers. 
Also, Solvency II emphasizes the importance of incorporating all the risks, including the inflation risk and the interest rate risk, for the calculation of the capital requirement.

The crisis of 2008 and the bad situation in 2015, which is considered as the worst year since the crisis of 2008, let the economists speak of business-cycle dynamics characterized by more than one interest rate, confer for instance \cite{gard}. Mathematically one can translate the cycle dynamics into a Markov-switching model, where the interest rate switches on random times and is kept constant inbetween. This model has been widely investigated in the mathematical finance literature, confer for instance Boyarchenko and Levendorskii \cite{bl}, Jiang and Pistorius \cite{jp} or Duan et al. \cite{duan}. In actuarial mathematics, some recent results on the risk theory in a Markovian environment can be found for instance in Asmussen \cite{asmus} or B\"auerle \cite{bae}, some optimisation problems have been investigated for example in Zhu and Yang \cite{zhu} or Jiang and Pistorius \cite{jp2}. 

Throughout the life cycle of a business, a company can face considerable economic challenges and multiple instances of financial distress. As a consequence, it might require capital injections to remain afloat. In actuarial mathematics, the term capital injections and the corresponding risk measure have been proposed in the discussion in Pafumi \cite{paf}. Further discussions can be found in Dickson and Waters \cite{dick1}, Eisenberg and Schmidli \cite{es} or in Nie et al. \cite{nie}. In their study Nie et al. even assume that the capital injections do not eliminate  the possibility of ruin for the insurer.

In the present paper, we assume that the considered insurance entity models its surplus via a Brownian motion with drift. The interest rate can attain a negative and a positive value, mimicking a business-cycle with two states. The target is to minimise the value of expected discounted capital injections, under the constraint that the company is not allowed to become insolvent. It is intuitively clear that in the time periods with positive interest rates, it is optimal to inject capital just if the surplus becomes negative and just as much as is necessary to land at zero. However, in times with negative yields it might be optimal to hold a strictly positive reserve. The heuristic explanation is that early injections appear cheaper than later payments. 
\\The paper is organised as follows. In Section 2, we formulate the problem and investigate its well-posedness. In Section 3, we briefly consider the strategy with minimal-amount injections, identify the optimal strategy as a barrier strategy and
introduce an algorithm for approximation of the value function.
%%%%%%%%%%%%%%%%%%%%%%%%%%%%%%%%%%%%%%%%%%%%%%%%%%%%%%%%%%%%%%%%%%
%Version 09.12.2016

\section{Model Setup}
Consider an insurance company whose surplus is given by a Brownian motion with drift $X_t=x+\mu t+\sigma W_t$, where $W$ is a standard Brownian motion $\mu,\sigma>0$. We assume that the underlying filtration $\mathcal F$ is complete, right-continuous and that $W$ is a standard $\mathcal F$-Brownian motion.
Further, we model the stochastic interest rate $\ir$ as a continuous time $\mathcal F$-Markov chain. For simplicity, we assume that the state space $\mathcal S$ consists of only two points $\delta_1\leq 0<\delta_2$ and the Markov chain switches with intensities $\lambda_1,\lambda_2>0$ respectively. 

The insurance company is allowed to ask for capital injections at any time, where the accumulated capital injections until $t$ are given by $Y_t$, yielding for the ex-controlled surplus $X^{Y}$:
\[
X_t^Y=x+\mu t+\sigma W_t+Y_t\;.
\]
We call a strategy $Y$ admissible if $Y$ is a right-continuous, non-decreasing and $\mathcal F$-adapted process which starts in zero with $Y_t\geq (-\inf\{X_s:s\in[0,t]\})\vee 0$. We denote the class of those processes by $\mathcal A$.

As a risk measure, we consider the value of expected discounted injections, where the injected capital is discounted by the stochastic interest rate $\ir_t$. The return function corresponding to an admissible strategy $Y\in\mathcal A$ is given by:
$$ V^Y(x,\eta) := \mE_{x,\eta}\Big[\int_0^\infty e^{-\int_0^t\ir_s\md s}\md Y(s)\Big]\;,$$
where the indices $x$ and $\eta$ indicate $X_0=x$ and $\ir_0=\eta$.
We seek to minimise the total discounted injected capital, i.e.\ we seek to find an admissible strategy $Y^*$ such that
 \begin{align}\label{e:cp}
 V(x,\eta):=\inf\limits_{Y\in\mathcal A} V^Y(x,\eta) = V^{Y^*}(x,\eta)\;,\quad x\geq 0, \eta\in \mathcal S\;.
 \end{align}
The formal corresponding Hamilton--Jacobi--Bellman equation for $i,j\in\{1,2\}$, $i\neq j$ and $x\ge 0$ is
\begin{align}\label{HJB1}
\begin{split}
  \min\Big\{ \frac{\sigma^2}{2} V''(x,\delta_i) + \mu V'(x,\delta_i) - (\delta_i+\lambda_i)V(x,\delta_i) + \lambda_iV(x,\delta_j),
	\\V'(x,\delta_i) + 1 \Big\} = 0\;.
\end{split}
\end{align}
\begin{nt}
For the sake of convenience, we introduce the following notation\medskip
\\$\bullet$
\begin{align*}
\mathcal L_i(f)(x)=\frac{\sigma^2}{2} f''(x) + \mu f'(x) - (\delta_i+\lambda_i)f(x) 
\end{align*}
for $i\in\{1,2\}$ and a sufficiently smooth function $f$. $\mathcal L_i$ can be applied also on $V^Y(x,\eta)$, whereas the notation $(V^Y)'(x,\eta)$ denotes the derivative with respect to $x$.
\medskip
\\$\bullet$ We define $Y_t^0:=0\vee -\inf\{X_s:s\in[0,t]\}$, the corresponding return function and the ex-injection process will be denoted by $V^0(x,\eta)$ and by $X^{0}:=X^{Y^0}$ respectively. In the following we call the strategy $Y^0$ the minimal-amount strategy.
\end{nt}
Since a negative interest rate can lead to an infinite return function, we have to find the conditions under which the minimisation problem is well-posed. That is, we want to find an admissible strategy $Y$ such that $V^Y(x,\eta)<\infty$ for $x\ge 0$, $\eta\in \mathcal S$.
\begin{prop}\label{p:well posed}
Assume that $\delta_1>-\frac{\lambda_1\delta_2}{\lambda_2+\delta_2}$. Then, the strategy $Y^0$ satisfies $$V^{0}(x,\eta)<\infty$$
for any $x\ge 0$, $\eta\in \mathcal S$. In particular, the stochastic control problem \eqref{e:cp} is well-posed.
\end{prop}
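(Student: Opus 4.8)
The plan is to produce an explicit smooth supersolution $h$ of the linear system governing the minimal-amount strategy and then to conclude, by an It\^o/verification argument, that $V^{0}\le h<\infty$. The first observation is purely algebraic: multiplying the hypothesis by $\lambda_2+\delta_2>0$ turns it into $\delta_1\delta_2+\delta_1\lambda_2+\lambda_1\delta_2>0$, i.e.\ $\det A>0$ for the ``discounted generator'' matrix
\[
A=\begin{pmatrix}-(\delta_1+\lambda_1)&\lambda_1\\[2pt]\lambda_2&-(\delta_2+\lambda_2)\end{pmatrix}.
\]
The same inequality also gives $\delta_1+\lambda_1>0$ (since $\tfrac{\lambda_1\delta_2}{\lambda_2+\delta_2}<\lambda_1$) and of course $\delta_2+\lambda_2>0$, so the trace of $A$ is negative; since the discriminant of the characteristic polynomial equals $(\delta_1+\lambda_1-\delta_2-\lambda_2)^2+4\lambda_1\lambda_2>0$, the matrix $A$ has two real eigenvalues, both strictly negative. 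Let $\theta^{\ast}<0$ be the larger one and $\phi=(\phi_1,\phi_2)$ an associated eigenvector; from $(\delta_1+\lambda_1+\theta^{\ast})\phi_1=\lambda_1\phi_2$ with $\delta_1+\lambda_1+\theta^{\ast}>0$ one sees that $\phi$ can be chosen with $\phi_1,\phi_2>0$.

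Next I would set $h(x,\delta_i):=\phi_ie^{-\gamma x}$ for a parameter $\gamma>0$ to be fixed. Using $-(\delta_i+\lambda_i)\phi_i+\lambda_i\phi_j=\theta^{\ast}\phi_i$, a direct computation gives for every $x\ge 0$
\[
\mathcal L_i\big(h(\cdot,\delta_i)\big)(x)+\lambda_i\,h(x,\delta_j)=\phi_ie^{-\gamma x}\Big(\tfrac{\sigma^2}{2}\gamma^2-\mu\gamma+\theta^{\ast}\Big).
\]
The scalar quadratic $\gamma\mapsto\tfrac{\sigma^2}{2}\gamma^2-\mu\gamma+\theta^{\ast}$ is negative at $\gamma=0$ (as $\theta^{\ast}<0$) and has a strictly positive root $\gamma_{+}=\big(\mu+\sqrt{\mu^2-2\sigma^2\theta^{\ast}}\big)/\sigma^2$, so any choice $\gamma\in(0,\gamma_{+})$ makes the right-hand side $\le 0$ on $[0,\infty)$. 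After fixing such a $\gamma$ and rescaling the eigenvector so that $\gamma\min(\phi_1,\phi_2)\ge 1$, the function $h$ is strictly positive, satisfies $\mathcal L_i(h(\cdot,\delta_i))(x)+\lambda_ih(x,\delta_j)\le 0$ on $[0,\infty)$, and has $h'(0,\delta_i)=-\gamma\phi_i\le -1$ for $i=1,2$.

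Finally I would run the verification step along $Y^{0}$. Here $X^{0}$ is a Brownian motion with positive drift reflected (upwards only) at $0$: $Y^{0}$ is continuous and nondecreasing, $\md Y^{0}$ charges only $\{t:X^{0}_t=0\}$, and $\int_0^\cdot\one_{\{X^{0}_s=0\}}\md s\equiv 0$. Applying It\^o's formula for the joint diffusion/Markov-switching dynamics to $e^{-\int_0^t\ir_s\md s}h(X^{0}_t,\ir_t)$ and localising at $\tau_n=\inf\{t\ge0:X^{0}_t\ge n\}\wedge n$, the Brownian and the compensated-jump parts are true martingales on the bounded horizon $[0,t\wedge\tau_n]$, while the finite-variation part is $\le -e^{-\int_0^t\ir_s\md s}\md Y^{0}_t$: its $\md t$-piece equals $e^{-\int_0^t\ir_s\md s}\big(\mathcal L_i(h(\cdot,\delta_i))(X^{0}_t)+\lambda_ih(X^{0}_t,\delta_j)\big)\md t\le 0$ on $\{X^{0}_t>0\}$ and is immaterial on the Lebesgue-null set $\{X^{0}_t=0\}$, and its $\md Y^{0}$-piece equals $e^{-\int_0^t\ir_s\md s}h'(0,\ir_t)\,\md Y^{0}_t\le -e^{-\int_0^t\ir_s\md s}\md Y^{0}_t$ since $\md Y^{0}$ is carried by $\{X^{0}_t=0\}$ and $h'(0,\delta_i)\le -1$. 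Taking expectations and using $h\ge 0$,
\[
\mE_{x,\eta}\Big[\int_0^{t\wedge\tau_n}e^{-\int_0^s\ir_u\md u}\md Y^{0}_s\Big]\le h(x,\eta)-\mE_{x,\eta}\Big[e^{-\int_0^{t\wedge\tau_n}\ir_u\md u}h(X^{0}_{t\wedge\tau_n},\ir_{t\wedge\tau_n})\Big]\le h(x,\eta);
\]
letting $n\to\infty$ and then $t\to\infty$ by monotone convergence yields $V^{0}(x,\eta)\le h(x,\eta)<\infty$. Since $Y^{0}\in\mathcal A$, problem \eqref{e:cp} is well-posed.

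The algebra (rewriting the hypothesis as $\det A>0$, locating the eigenvalues, positivity of $\phi$, the scalar quadratic in $\gamma$) is routine; the genuinely delicate point is the last paragraph, namely stating the It\^o formula for the combined reflected-diffusion-and-Markov-chain dynamics in a self-contained way and justifying the localisation together with the monotone passage to the limit in the increasing process $Y^{0}$. If one is willing to use that $\ir$ is independent of $W$, the reflection term can be avoided altogether: conditioning on the chain and applying Tonelli gives $V^{0}(x,\eta)\le\sup_{t\ge0}\mE_{\eta}\big[e^{-\int_0^t\ir_s\md s}\big]\cdot\mE\big[Y^{0}_\infty\big]$, which is finite because $\sup_{t\ge0}\mE_\eta[e^{-\int_0^t\ir_s\md s}]\le\max(\phi_1,\phi_2)$ (again a consequence of $\theta^{\ast}<0$, via the supermartingale $e^{-\int_0^t\ir_s\md s}\phi(\ir_t)$) and $Y^{0}_\infty\le -\inf_{s\ge0}(\mu s+\sigma W_s)$ is integrable, being exponentially distributed.
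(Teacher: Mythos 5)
Your proof is correct, and it takes a genuinely different route from the paper's. The paper computes $\E_{x,\eta}[e^{-\int_0^t \ir_s\md s}]$ in closed form via Pedler's occupation-time formula for two-state chains, reads off an exponential bound $Ce^{-ct}$ (your $-c$ is exactly the leading eigenvalue $\theta^\ast$ of your matrix $A$), and then bounds $V^0$ by $C\,\E[\int_0^\infty e^{-cs}\md Y^0_s]$, implicitly using the independence of $Y^0$ and $\ir$ exactly as in your closing remark. You instead build an explicit smooth supersolution $h(x,\delta_i)=\phi_i e^{-\gamma x}$ from the Perron eigenvector of $A$ (your reduction of the hypothesis to $\det A>0$ together with $\operatorname{tr}A<0$, and hence to both eigenvalues being negative, is exactly right) and verify $V^0\le h$ by It\^o's formula for the reflected diffusion with Markov switching. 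What your main route buys is robustness: it avoids the explicit two-state Laplace transform entirely and generalises verbatim to any finite number of states, where the well-posedness condition becomes the spectral condition on the discounted generator; it also delivers an explicit pointwise majorant of $V^0$ rather than just finiteness. What it costs is precisely the delicate verification step you flag (It\^o with reflection, localisation, monotone passage to the limit), which the paper sidesteps completely. Your alternative Tonelli argument, $V^0\le\sup_t\E_\eta[e^{-\int_0^t\ir_s\md s}]\cdot\E[Y^0_\infty]$ with $Y^0_\infty$ exponentially distributed, is if anything a cleaner version of the paper's final step, since it needs only boundedness of the discount factor's expectation (via your supermartingale $e^{-\int_0^t\ir_s\md s}\phi(\ir_t)$) rather than its exponential decay.
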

\begin{proof}
Let $x\geq 0$ and $\eta\in \mathcal S$. Clearly, the strategy $Y^0$ is independent of the stochastic interest rate process $\ir$. First we calculate the average interest rate and then we relate it to the expectation. Define the occupation time of the stochastic interest rate in the level $\delta_1$ by $\Lambda(t):=\int_0^t\one_{\{\ir_s=\delta_1\}}\md s$ for any $t\geq 0$. Then, we have $\int_0^t\ir_s\md s = t\delta_2 + (\delta_1-\delta_2)\Lambda(t)$ for $t\geq 0$. Hence, we get
$$ \E_{x,\eta}\Big[\exp\Big(-\int_0^t\ir_s\md s\Big)\Big] = \exp(-t\delta_2)\E_{x,\eta}\Big[\exp\Big(-(\delta_1-\delta_2)\Lambda(t)\Big)\Big],\quad t\geq 0.$$
Let
$$I:=\left(\begin{matrix}1&0\\0&1\end{matrix}\right)\quad \mbox{and}\quad R := \left(\begin{matrix}-\lambda_1-\delta_1+\delta_2&\lambda_1 \\\lambda_2&-\lambda_2\end{matrix}\right)\;.$$
From \cite[p. 385]{pedler.71} one knows
$$ \E_{x,\eta}\Big[\exp\Big(-(\delta_1-\delta_2)\Lambda(t)\Big)\Big] =  \left(\one_{\{\ir_0=\delta_1\}},\one_{\{\ir_0=\delta_2\}}\right)\cdot\exp(tR)\cdot
  \left(\begin{matrix}1\\1\end{matrix}\right),\quad t\geq 0\;.$$ 
Defining 
\begin{align*}
&a:=-(\lambda_1+\lambda_2+\delta_1+\delta_2)\;, \\&b:=\sqrt{(\lambda_1+\lambda_2+\delta_1-\delta_2)^2+4\lambda_2(\delta_2-\delta_1)}\;,
\\&\omega_1:= \delta_2+\frac{1}{2}\left(a+b\right)\;,\quad \omega_2 := \delta_2+\frac{1}{2}\left(a-b\right),
  \end{align*}
we find that
  $$ \exp(tR) = \frac{\omega_1e^{t\omega_2}-\omega_2e^{t\omega_1}}{\omega_1-\omega_2}\cdot I + \frac{e^{t\omega_1}-e^{t\omega_2}}{\omega_1-\omega_2} \cdot R\;.$$ Since, 
  \begin{align*}
    \left(\one_{\{\ir_0=\delta_1\}},\one_{\{\ir_0=\delta_2\}}\right)\cdot I \cdot
  \left(\begin{matrix}1\\1\end{matrix}\right) &= 1\quad\text{and} \\
    \left(\one_{\{\ir_0=\delta_1\}},\one_{\{\ir_0=\delta_2\}}\right)\cdot R\cdot
  \left(\begin{matrix}1\\1\end{matrix}\right) &= (\delta_2-\delta_1)\one_{\{\ir_0=\delta_1\}}
  \end{align*}
we find that
\begin{align*} 
\mE_{x,\eta}\left[\exp\left(-\int_0^t\ir_s\md s\right)\right] &= \frac{e^{t(\omega_1-\delta_2)}-e^{t(\omega_2-\delta_2)}}{\omega_1-\omega_2}(\delta_2-\delta_1)\one_{\{\ir_0=\delta_1\}}
\\&\quad {}+\frac{\omega_1e^{t(\omega_2-\delta_2)}-\omega_2e^{t(\omega_1-\delta_2)}}{\omega_1-\omega_2}\;.
\end{align*}
Observe that $\omega_2-\delta_2<\omega_1-\delta_2=\frac{1}{2}(a+b)=:-c<0$ by assumption. Hence, there is a positive constant $C>0$ depending on $\lambda_1,\lambda_2,\delta_1,\delta_2$ such that
  $$ \E_{x,\eta}\left[\exp\left(-
  \int_0^t\ir_s \md s\right)\right] \leq C\exp(-tc),\quad t\geq 0.$$
Thus, we have
$$ V^{0}(x,\eta) \le C \E_{x,\eta}\left[\int_0^\infty e^{-cs}\md Y^0_s\right]<\infty. $$
\end{proof}
%%%%%%%%%%%%%%%%%%%%%%%%%%%%%%%%%%%%%%%%%%%%%%%%%%%%%%%%%%%%%%%%%%%%%%%%%%%%%%%%%%%%%%%%%%%%%%%%%%%%%%%%%%
\section{The Value Function and the Optimal Strategy}
In this section we aim at identifying the value function and the optimal strategy. From now on, we always assume \bigskip
\\\textbf{Assumption:} $\delta_1>-\frac{\lambda_1\delta_2}{\lambda_2+\delta_2}>-\lambda_1$. 
\bigskip
\\Then, Proposition \ref{p:well posed} yields that the stochastic control problem \eqref{e:cp} is well-posed.
%%%%%%%%%%%%%%%%%%%%%%%%%%%%%%%%%%%%%%%%%%%%%%%%%%%%%%%%%%%%%%%%%%%%%%%%%%%%%%%%%%%%%%%%%%%%%%%%%%%%%%%%%%
\subsection{Performance of the minimal-amount injection strategy}
We start our investigation by analysing the performance of the minimal-amount injection strategy $Y^0$, which turns out to be optimal in some cases. We calculate its performance function $V^0(x,\eta)$ in Proposition \ref{prop:y0} below. There, we also specify the conditions under which $Y^0$ is the optimal injection strategy. 
\begin{prop}\label{prop:y0}
 For $\lambda_2> 0$ define 
\begin{align*}
 &a:=\lambda_1+\delta_1+\lambda_2+\delta_2&&\mbox{and} &&\alpha:= \lambda_1+\delta_1-\lambda_2-\delta_2,\\
    &D_1 := \frac {a  - \sqrt{ \alpha^2 +4\lambda_1\lambda_2}}2 &&\mbox{and}   
    &&D_2 := \frac {a  + \sqrt{ \alpha^2 +4\lambda_1\lambda_2}}2, \\
&A_1 := \frac{\mu+\sqrt{\mu^2+2\sigma^2D_1}}{\sigma^2} &&\mbox{and}
&&A_2 := \frac{\mu+\sqrt{\mu^2+2\sigma^2D_2}}{\sigma^2}, \\
&E:= \frac{\lambda_2+\delta_2-D_1}{\lambda_2} && \mbox{and}
&&F:= \frac{\lambda_2+\delta_2-D_2}{\lambda_2}, \\
&B_2:= \frac{1-F}{A_1(E-F)} && \mbox{and}
&&C_2:= \frac{E-1}{A_2(E-F)}, \\
&B_1:= EB_2 && \mbox{and}
&&C_1:= FC_2. \\
\end{align*}
Then we have
\begin{align*}
V^0(x,\delta_1) &= B_1e^{-A_1\cdot x} +   C_1e^{-A_2\cdot x}, \\
V^0(x,\delta_2) &= B_2e^{-A_1\cdot x} +   C_2e^{-A_2\cdot x}
\end{align*} 
for any $x\geq 0$. Moreover, $V^0 = V$ if and only if $B_1A_1^2 + C_1A_2^2\geq 0$. In this case $Y^0$ is the optimal injection strategy.\medskip
\\If $\lambda_2=0$, the calculations become much simpler. In this one knows immediately 
\[
V^0(x,\delta_2)=\frac{\sigma^2}{\mu+\sqrt{\mu^2+2\sigma^2(\delta_2)}}e^{-\frac{\mu+\sqrt{\mu^2+2\sigma^2(\delta_2)}}{\sigma^2}x}\;.
\]
$V^0(x,\delta_1)$ can be easily obtained via solving the differential equation
\[
\mathcal L_1(V^0)(x,\delta_1)+\lambda_1 V^0(x,\delta_2)=0
\]
with boundary conditions $(V^0)'(0,\delta_1)=-1$ and $\lim\limits_{x\to\infty}V^0(x,\delta_1)=0$.
\end{prop}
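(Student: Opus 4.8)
\noindent\emph{Proof strategy.} The argument splits into three tasks: compute $V^0$ in closed form, read off the signs of the constants, and decide the optimality of $Y^0$. Away from the reflecting barrier $Y^0$ does not act and $X^0$ is a Brownian motion with drift, so $V^0$ should solve the linear system $\mathcal L_i(V^0)(\cdot,\delta_i)+\lambda_iV^0(\cdot,\delta_j)=0$ on $(0,\infty)$ (for $i\ne j$), with the Neumann boundary condition $(V^0)'(0,\delta_i)=-1$ (one unit of reflection costs one unit) and the decay condition $\lim_{x\to\infty}V^0(x,\delta_i)=0$; the latter is legitimate because $V^0(x,\eta)\le V^0(0,\eta)<\infty$ (from the pathwise bound $Y^0_t\le Y^0_t$ for the strategies started from $x$ and from $0$, together with dominated convergence and Proposition \ref{p:well posed}). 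Rather than first proving that $V^0$ is $C^2$, I would let $g_1,g_2$ be the explicit solutions of this constant-coefficient boundary value problem and identify them with $V^0$ by verification.

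For the explicit formulas, the ansatz $v_ie^{-Ax}$ reduces the system to $\bigl(p(A)-(\delta_1+\lambda_1)\bigr)\bigl(p(A)-(\delta_2+\lambda_2)\bigr)=\lambda_1\lambda_2$ with $p(A):=\tfrac{\sigma^2}{2}A^2-\mu A$; this forces $p(A)\in\{D_1,D_2\}$, which under the standing Assumption are the two positive roots of a quadratic whose product $(\delta_1+\lambda_1)(\delta_2+\lambda_2)-\lambda_1\lambda_2$ and sum $a$ are positive, and the two decaying exponents are the larger roots $A_1,A_2$ of $p(A)=D_k$. The eigenvector belonging to $D_k$ has $\delta_1$-component $\frac{\lambda_2+\delta_2-D_k}{\lambda_2}$ times its $\delta_2$-component, giving the factors $E$ (for $D_1$) and $F$ (for $D_2$), and the two boundary conditions become the $2\times2$ linear system $A_1EB_2+A_2FC_2=1$, $A_1B_2+A_2C_2=1$, with solution the stated $B_2,C_2$, whence $B_1=EB_2$ and $C_1=FC_2$. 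To obtain $g_i=V^0$, apply It\^o's formula to $e^{-\int_0^t\ir_s\md s}g_{\ir_t}(X^0_t)$, using $\md X^0_t=\mu\md t+\sigma\md W_t+\md Y^0_t$, the fact that $g$ solves the system (so the drift term vanishes), and the fact that $Y^0$ increases only on $\{X^0=0\}$, where $g_i'=-1$ (so the $\md Y^0$-term equals $-\int_0^{\cdot}e^{-\int_0^s\ir_u\md u}\md Y^0_s$). Localising the resulting local martingale and letting $t\to\infty$, the boundary term $e^{-\int_0^t\ir_s\md s}g_{\ir_t}(X^0_t)$ has vanishing expectation in the limit because $g$ is bounded and $\E_{x,\eta}[e^{-\int_0^t\ir_s\md s}]\le Ce^{-ct}$ (from the proof of Proposition \ref{p:well posed}), while monotone convergence handles the injection integral; this yields $g_\eta(x)=V^0(x,\eta)$.

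The sign analysis is the short step that makes the condition explicit. Evaluating the monic quadratic with roots $D_1,D_2$ at $\delta_2$ produces $\lambda_2(\delta_1-\delta_2)<0$, so $D_1<\delta_2<D_2$, and comparing $-\alpha$ with $\sqrt{\alpha^2+4\lambda_1\lambda_2}$ gives $D_2>\lambda_2+\delta_2$. These imply $B_2,C_2>0$, $E>1$, $F<0$, hence $B_1>0\ge C_1$, and $A_1<A_2$. Consequently $V^0(\cdot,\delta_2)$ is convex, so $(V^0)'(\cdot,\delta_2)\ge(V^0)'(0,\delta_2)=-1$ automatically; while $(V^0)''(x,\delta_1)=e^{-A_1x}\bigl(A_1^2B_1+A_2^2C_1e^{-(A_2-A_1)x}\bigr)$ has a bracket that is nondecreasing in $x$ (since $C_1\le0$ and $A_1<A_2$) and tends to $A_1^2B_1\ge0$, so $(V^0)'(\cdot,\delta_1)\ge-1$ on $[0,\infty)$ precisely when that bracket is already nonnegative at $x=0$, i.e. precisely when $B_1A_1^2+C_1A_2^2\ge0$. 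Since $Y^0\in\mathcal A$ one always has $V\le V^0$. If $B_1A_1^2+C_1A_2^2\ge0$, the above shows that $V^0$ satisfies the Hamilton--Jacobi--Bellman equation \eqref{HJB1} pointwise on $[0,\infty)$, and a verification argument --- It\^o applied to $e^{-\int_0^t\ir_s\md s}V^0(X^Y_t,\ir_t)$ for arbitrary $Y\in\mathcal A$, using $\mathcal L_i(V^0)(\cdot,\delta_i)+\lambda_iV^0(\cdot,\delta_j)=0$, $(V^0)'\ge-1$, the jump estimate $V^0(X^Y_s,\cdot)-V^0(X^Y_{s-},\cdot)\ge-\Delta Y_s$, localisation, and the same decay of the boundary term --- gives $V^0(x,\eta)\le V^Y(x,\eta)$, hence $V^0=V$ and $Y^0$ is optimal. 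If instead $B_1A_1^2+C_1A_2^2<0$, then $(V^0)'(x_0,\delta_1)<-1$ for some $x_0>0$, so the strategy ``follow $Y^0$ until the first time $T$ at which $(X^0,\ir)$ enters a small neighbourhood of $(x_0,\delta_1)$ (finite with positive probability from every $(x,\eta)$), then inject a small $\varepsilon$, then follow $Y^0$'' has cost $\E_{x,\eta}\bigl[\int_0^Te^{-\int_0^s\ir_u\md u}\md Y^0_s+\one_{\{T<\infty\}}e^{-\int_0^T\ir_s\md s}\bigl(\varepsilon+V^0(X^0_T+\varepsilon,\ir_T)\bigr)\bigr]<V^0(x,\eta)$, because $\varepsilon+V^0(\cdot+\varepsilon,\delta_1)<V^0(\cdot,\delta_1)$ near $x_0$; hence $V(x,\eta)<V^0(x,\eta)$. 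Finally, when $\lambda_2=0$ the state $\delta_2$ is absorbing: $V^0(\cdot,\delta_2)$ solves $\tfrac{\sigma^2}{2}(V^0)''+\mu(V^0)'-\delta_2V^0=0$ with $(V^0)'(0,\delta_2)=-1$ and $\lim_{x\to\infty}V^0(x,\delta_2)=0$, which gives the displayed exponential, and $V^0(\cdot,\delta_1)$ then solves the stated inhomogeneous equation; the verification step is unchanged.

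The genuinely delicate point is the reduction, in the third paragraph, of the global requirement $(V^0)'\ge-1$ to the single boundary inequality $B_1A_1^2+C_1A_2^2\ge0$ --- this rests on having pinned down the signs $B_1>0\ge C_1$ and $A_1<A_2$, i.e. on the chain $0<D_1<\delta_2<D_2$ together with $D_2>\lambda_2+\delta_2$, all of which is extracted from the standing Assumption, and on the fact that the $\delta_2$-equation imposes no constraint. Everything else is either routine ODE bookkeeping or a standard localisation-and-limit argument, whose only recurrent analytic ingredient is the exponential decay of $\E_{x,\eta}[e^{-\int_0^t\ir_s\md s}]$ supplied by Proposition \ref{p:well posed}.
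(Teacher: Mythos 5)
Your proof is correct and follows the same route as the paper's: solve the coupled constant-coefficient ODE system explicitly under the Neumann condition $(V^0)'(0,\cdot)=-1$ and decay at infinity, then reduce the question of optimality of $Y^0$ to convexity of $V^0(\cdot,\delta_1)$, which by the sign pattern $B_1>0\ge C_1$, $A_1<A_2$ is equivalent to $(V^0)''(0,\delta_1)=B_1A_1^2+C_1A_2^2\ge 0$. You make explicit several steps the paper leaves implicit (the eigenvector derivation of $E,F,B_2,C_2$, the It\^o/verification argument behind ``solves HJB $\Rightarrow$ optimal'', and the strategy-improvement argument for the converse direction), and your index bookkeeping --- placing the critical convexity condition on the $\delta_1$-component --- is the one consistent with the stated condition and with the paper's numerical example, whereas the paper's own proof writes $\delta_2$ and $B_2A_1^2+C_2A_2^2$ at the corresponding places, evidently a typo.
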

\begin{proof}
Due to the assumption on $\delta_1$, we have $\delta_1\delta_2+\delta_1\lambda_2+\lambda_1\delta_2>0$ and, hence, $D_2>D_1>0$. Also, we see that $A_1,A_2>0$ and $E>F$.
\\  
Additionally, we have $D_j = \frac{\sigma^2}2A_j^2+\mu A_j$ for $j\in\{1,2\}$. Now, it is easy to see that for $i,j\in\{1,2\}$ with $i\neq j$ it holds
 \begin{align*}
\mathcal L_i(V^0)(x,\delta_i) + \lambda_i V^0(x,\delta_j) = 0\;.
 \end{align*}
 and the right-hand side of the claimed equality is the unique solution to these systems of ODEs with derivative $-1$ in $x=0$ and vanishing at infinity. Thus, we have
\begin{align*}
  V^0(x,\delta_1) &= B_1e^{-A_1\cdot x} +   C_1e^{-A_2\cdot x}, \\
  V^0(x,\delta_2) &= B_2e^{-A_1\cdot x} +   C_2e^{-A_2\cdot x}
\end{align*} 
for any $x\geq 0$. Also, $V^0(\cdot,\delta_2)$ is convex and, hence, $(V^0)'(\cdot,\delta_2)\geq -1$ which yields
 $$ \min\{\mathcal L_2(V^0)(x,\delta_2)+\lambda_2 V^0(x,\delta_1), (V^0)'(x,\delta_2) + 1\} = 0$$
%\frac{\sigma^2}{2} (V^0_1)''(x) + \mu (V^0)'(x,\delta_1) - (\delta_1+\lambda_1)V^0_1(x) + \lambda_1V^0_2(x)
for any $x\geq 0$. We see that $V^0$ is a $\mathcal C^2$-function and solves the HJB equation \eqref{HJB1} iff $(V^0)'(x,\delta_2) \geq -1$ for any $x\geq 0$. 
\medskip
\\However, $(V^0)''(x,\delta_2)$ has at most one zero $x_0\geq 0$ because it is the sum of two exponential functions. Above this zero we must have $(V^0)''(x,\delta_2)\geq 0$ because $V^0$ is decreasing. Consequently, $(V^0)''(x,\delta_2)<0$ on $[0,x_0]$ if such a zero $x_0$ exists.
\medskip
\\Now, if $B_2A_1^2+C_2A_2^2\geq 0$, then $(V^0)''(0,\delta_2) \geq 0$ and, hence, we either have $x_0=0$ or $(V^0)''(x,\delta_2)$ does not have any zeros. Hence, $V^0(x,\delta_2)$ is convex and, thus, we have $(V^0)'(x,\delta_2)\geq -1$.
\medskip
\\If $(V^0)'(x,\delta_2) \geq -1$ for any $x\geq 0$, then $0\leq (V^0)''(0,\delta_2) = B_2A_1^2+C_2A_2^2$ as claimed.
\end{proof}
% Version 16.12.2016

\subsection{Recursion}
One might ask why it is necessary to establish a recursion if one can tackle the problem by solving the corresponding differential equation. The problem lies in the correct choice of the optimal barrier level. It turns out that the function to minimise exhibits a complex non-linear dependence on the barrier $b$ as a variable. Even in this two states problem it is a hard challenge to find the optimal barrier in the negative state. The complexity of the problem increases significantly with the number of states. In contrast, the recursion could be generalised to an arbitrary number of states.

In this section we construct a sequence of functions $(V_n)_{n\in \mathbb N}$ such that $V_{2n} \rightarrow V(\cdot,\delta_2)$ and $V_{2n+1} \rightarrow V(\cdot,\delta_1)$ uniformly together with their first two derivatives. The function $V_n$ is actually the value function of the following modified problem: The same as the original problem but we start in $\delta_1$ if $n$ is odd and in $\delta_2$ if it is even, and no more capital injections need to be made after the $n+1$ change in the interest rate $\ir$.

Obviously, we have to invest less in the modified problems and thus we expect that $V_n\leq V$. The optimal strategy in the modified problems are proved to be of barrier type, where the barriers are adjusted at the switching times of the interest rate.
\subsubsection{Initial step:}
Consider at first the auxiliary problem where we seek to minimise the value of expected discounted capital injections for the preference rate $\delta_2>0$ up to an exponentially distributed stopping time $T_2\sim{\rm Exp}(\lambda_2)$. Because $\delta_2>0$, it is immediately clear that the optimal barrier is given by $0$, i.e. the optimal strategy $Y^0$ is to inject capital just in the case the surplus becomes negative and just as much as to shift the process back to zero. Since $Y^0$ and $T_2$ are independent, we obtain 
\begin{align*}
\mE_x\Big[\int_0^{T_2}e^{-\delta_2 t}\md Y_t^0\Big]=\mE_x\Big[\int_0^\infty e^{-(\delta_2+\lambda_2)t}\md Y^0_t\Big]\;.
\end{align*}
Therefore, compare for instance \cite{es}, the value function is given by
\[
V_0(x):=\frac1{A_2}e^{-A_2 x},\quad \quad A_2:=\frac{\mu+\sqrt{\mu^2+2\sigma^2(\delta_2+\lambda_2)}}{\sigma^2}\;,
\]
i.e. $V_0(x) = \inf\limits_{Y\in\mathcal A} \E_x[\int_0^{T_2}e^{-\delta_2t}\md Y_t]$ for $x\ge 0$. 
\begin{rem}
Analogously, if we merely have $\delta_1>-\lambda_1$, then we could have done the same approach starting from the negative interest-rate state except that the constant $A_2$ has to be replaced by 
\[
A_1:=\frac{\mu+\sqrt{\mu^2+2\sigma^2(\delta_1+\lambda_1)}}{\sigma^2}\;.
\]
For the sake of convenience, we additionally define
\[
\tilde A_1:=\frac{-\mu+\sqrt{\mu^2+2\sigma^2(\delta_1+\lambda_1)}}{\sigma^2}\;.
\]
\end{rem}
\subsubsection{Further steps:}
Analogously to Initial step, we denote by $V_n$ the value function of the problem with $n$ jumps, where after the $n$th jump one lands in the state with $\delta_2>0$ and stops the consideration at the next exponential switching time.
In the following, we construct the value functions $(V_n)_{n\in\mathbb N}$ along with the optimal barriers $b_n$. Proposition \ref{p:ODE} points out that our definitions do actually make sense and Lemma \ref{l:V_n solves HJB} verifies that $V_n$ is indeed the value function of the modified problem for every $n\in\N$. 
Due to the construction of our auxiliary problems, it is clear that in the $(2n)$th problem we start with the $\delta_2>0$ state, and in the $(2n+1)$st problem with the $\delta_1\le 0$ state. Theorem \ref{t:convergence result} states that the sequences $(V_{2n})_{n\ge 1}$ and $(V_{2n-1})_{n\ge 1}$ converge to the value function $V$ of the original problem in a suitable way.

It is clear, that in times of positive interest rate, it is optimal to inject as late and as less as possible. That, is we know the optimal strategy: it is a barrier strategy with barrier $b_{2n}:=0$. Then, knowing the value function of the $(2n-1)$st problem, we can easily calculate the value function of the $(2n)$th problem. During times of negative interest rate, it is cheaper to inject early at once, but the optimal amount is not obvious. If the optimal strategy is a constant barrier strategy, then this barrier is independent of the surplus level. In order to simplify the calculations, we can start by finding the optimal barrier for zero initial surplus. Imagine now, we have already calculated the value function of the $(2n)$th problem. We optimise the level of the barrier $b\ge 0$ until the next switching time $T_1\sim{\rm Exp}(\lambda_1)$. The return function $V^b$, corresponding to the strategy: keep the surplus over $b$ up to $T_1$ and then follow the optimal strategy from $2n$,  yields
\[
V^b(0)=\mE_0\Big[\int_0^{T_1} e^{-\delta_1 s}\md Y_s^0+b+e^{-\delta_1 T_1}V_{2n}\big(b+X_{T_1}^0\big)\Big]\;.
\]
In order to find a $b$ minimising the above function, we have to consider just the terms depending on $b$:
$$g_{2n}:b\mapsto b + \E_0\left[e^{-\delta_1 T_1} V_{2n}\Big(b + X^{Y^0}_{T_1}\Big) \right].$$
Due to Corollary \ref{k:I=N} below, $V_{2n}$ is strictly decreasing and convex, which means that $g_{2n}$ has a unique minimum. We choose recursively a minimum $b_{2n+1}$ for the function $g_{2n}$ and define recursively $V_{2n+1}$ as the unique solution to the ODE
 \begin{align}\label{eq:2n+1}
 \frac{\sigma^2}{2}V''_{2n+1}(x) + \mu V'_{2n+1}(x) -(\delta_1+\lambda_1)V_{2n+1}(x) + \lambda_1 V_{2n}(x) = 0
\end{align}  
for $x\geq b_{2n+1}$ with $V_{2n+1}'(b_{2n+1})=-1$, $\lim\limits_{x\rightarrow\infty} V_{2n+1}(x) = 0$ and
 $$ V_{2n+1}(x) := V_{2n+1}(b_{n+1}) + (b_{2n+1}-x), \quad x\in[0,b). $$
Also, we define $V_{2n+2}$ as the unique solution to the ODE
 \begin{align}\label{eq:2n} \frac{\sigma^2}{2}V''_{2n+2}(x) + \mu V'_{2n+2}(x) -(\delta_2+\lambda_2)V_{2n+2}(x) + \lambda_2 V_{2n+1}(x) = 0
 \end{align}
for $x\geq 0=b_{2n}$ with $V_{2n+2}'(0)=-1$ and $\lim\limits_{x\rightarrow\infty} V_{2n+2}(x) = 0$.

As we will see, $(V_n)_{n\in\mathbb N}$ defines a sequence of convex, decreasing $\mathcal C^2$-functions vanishing together with their derivatives at infinity. 

Let 
\begin{align*}
\I := \{n\in\mathbb N: V_n\in \mathcal C^2,\; V_n>0,\;V_n''\ge 0,\; V_n'<0\;,\lim_{x\rightarrow\infty}V_n(x)=0\}
\end{align*}
and note that $0\in \I$. Corollary \ref{k:I=N} below implies that $\I=\mathbb N$. Then, for $n\in\mathbb N$ it holds $g_{2n}'(b) = 1 + \E_0\left[e^{-\delta_1 T_1} V'_{2n}(b + X^0_{T_1}) \right]$ with a unique zero $b_{2n+1}$ which satisfies $b_{2n+1}=0$ if $g'_{2n}(0)\geq 0$ or
 $$ -1 = \E_0\left[e^{-\delta_1 T_1} V'_{2n}(b_{2n+1} + X^0_{T_1}) \right]. $$
 
Next we will show that $V_{n+1}$ is, indeed, twice continuously differentiable for any $n\in \I$. Since $V_{n+1}$ solves the ODE \eqref{eq:2n} or \eqref{eq:2n+1} on $[b_{n+1},\infty)$ and since it is linear below $b_{2n+1}$ with slope $-1$ it is clear that it is a $\mathcal C^1$-function which is twice continuously differentiable on $\mathbb R_+\backslash \{b_{n+1}\}$. If $b_{n+1}=0$, then $V_{n+1}$ is twice continuously differentiable. If $b_{n+1}>0$, then the second left-side derivative in $b_{n+1}$ equals zero because $V_{n+1}$ is linear below $b_{n+1}$. The next lemma observes that with our choice of $b_{n+1}$ the right-side derivative vanishes as well if $b_{n+1}>0$.
\begin{lem}\label{l:ODE vanish}
  Let $2n\in \I$. If $b_{2n+1}>0$, then $V_{2n+1}''(b_{2n+1}) = 0$. If $b_{2n+1}=0$, then $V_{2n+1}''(b_{2n+1})=V_{2n+1}''(0)\geq0$.
In particular, $V_{2n+1}$ is twice continuously differentiable.
\end{lem}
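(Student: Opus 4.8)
The plan is to reduce the claim to a smooth‑fit identity at the barrier and to obtain it by differentiating, in the barrier level, the family of candidate value functions. By construction $V_{2n+1}$ is affine with slope $-1$ on $[0,b_{2n+1}]$ and solves \eqref{eq:2n+1} on $[b_{2n+1},\infty)$ with $V_{2n+1}'(b_{2n+1})=-1$ and $\lim_{x\to\infty}V_{2n+1}(x)=0$; hence it is automatically $\mathcal C^1$ on $\mathbb{R}_+$, is $\mathcal C^2$ on $\mathbb{R}_+\setminus\{b_{2n+1}\}$, and its left second derivative at $b_{2n+1}$ vanishes. Substituting $V_{2n+1}'(b_{2n+1})=-1$ into \eqref{eq:2n+1} at $x=b_{2n+1}$ gives
\[
V_{2n+1}''(b_{2n+1}^+)=\tfrac{2}{\sigma^2}\bigl(\mu+(\delta_1+\lambda_1)V_{2n+1}(b_{2n+1})-\lambda_1 V_{2n}(b_{2n+1})\bigr),
\]
so the whole lemma reduces to showing that the right second derivative $V_{2n+1}''(b_{2n+1}^+)$ is $0$ if $b_{2n+1}>0$ and $\ge 0$ if $b_{2n+1}=0$ (in the second case there is no affine piece, so $\mathcal C^2$‑ness is then automatic).

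For each $\beta\ge 0$ let $h_\beta$ be the unique solution of \eqref{eq:2n+1} on $[\beta,\infty)$ with $h_\beta'(\beta)=-1$ and $\lim_{x\to\infty}h_\beta(x)=0$, so that $V_{2n+1}=h_{b_{2n+1}}$ on $[b_{2n+1},\infty)$. Let $A_1$ and $\tilde A_1$ be as in the Remark following the Initial step; since $2n\in\I$ forces $\delta_1+\lambda_1>0$, the numbers $-A_1<0<\tilde A_1$ are precisely the roots of $z\mapsto\tfrac{\sigma^2}{2}z^2+\mu z-(\delta_1+\lambda_1)$. Using that $V_{2n}$ decays at infinity (part of $2n\in\I$; cf.\ the explicit form from Proposition \ref{p:ODE}), fix the particular solution $h_\infty$ of \eqref{eq:2n+1} with $\lim_{x\to\infty}h_\infty(x)=0$. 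Then $h_\beta(x)=h_\infty(x)+c(\beta)e^{-A_1 x}$, where $h_\beta'(\beta)=-1$ forces $c(\beta)=A_1^{-1}\bigl(h_\infty'(\beta)+1\bigr)e^{A_1\beta}$; in particular $h_\beta$ depends smoothly on $\beta$ and $\psi_\beta:=\partial_\beta h_\beta$ is $\psi_\beta(x)=c'(\beta)e^{-A_1 x}$ (equivalently, $\mathcal L_1\psi_\beta=0$ on $(\beta,\infty)$ with $\psi_\beta$ vanishing at infinity), so $\psi_\beta'(\beta)=-A_1\psi_\beta(\beta)$. Differentiating the identity $h_\beta'(\beta)\equiv-1$ in $\beta$ yields $0=h_\beta''(\beta)+\psi_\beta'(\beta)$, hence $h_\beta''(\beta)=A_1\psi_\beta(\beta)$.

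It remains to link $\psi_\beta(\beta)$ to $g_{2n}'(\beta)$. Apply It\^o's formula to $t\mapsto e^{-(\delta_1+\lambda_1)t}h_\beta(\beta+X^0_t)$: the process $\beta+X^0$ is a Brownian motion with drift reflected at $\beta$, equation \eqref{eq:2n+1} reads $\mathcal L_1 h_\beta=-\lambda_1 V_{2n}$ on $[\beta,\infty)$, and $\md Y^0$ increases only when $\beta+X^0_t=\beta$, where $h_\beta'(\beta)=-1$, so the reflection terms contribute $-\md Y^0$. Taking expectations, using that the stochastic integral is a true martingale (its integrand is bounded) and that $\E_0[e^{-(\delta_1+\lambda_1)t}h_\beta(\beta+X^0_t)]\to 0$ as $t\to\infty$, one obtains, with $T_1\sim\mathrm{Exp}(\lambda_1)$ independent of $X^0$ and Fubini,
\[
h_\beta(\beta)=\E_0\bigl[e^{-\delta_1 T_1}V_{2n}(\beta+X^0_{T_1})\bigr]+C,\qquad C:=\E_0\Bigl[\int_0^{T_1}e^{-\delta_1 s}\md Y^0_s\Bigr],
\]
with $C$ independent of $\beta$. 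Thus $g_{2n}(\beta)=\beta+h_\beta(\beta)-C$, and differentiating in $\beta$ while using $h_\beta'(\beta)=-1$ gives $g_{2n}'(\beta)=\psi_\beta(\beta)$. Combining with the previous step,
\[
V_{2n+1}''(b_{2n+1}^+)=h_{b_{2n+1}}''(b_{2n+1})=A_1\,g_{2n}'(b_{2n+1}),\qquad A_1>0.
\]

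The conclusion is then immediate from the optimality of $b_{2n+1}$ for $g_{2n}$ on $[0,\infty)$, which was recorded just before the lemma: if $b_{2n+1}>0$ it is an interior minimiser, so $g_{2n}'(b_{2n+1})=0$ and hence $V_{2n+1}''(b_{2n+1}^+)=0$, matching the left second derivative, so $V_{2n+1}\in\mathcal C^2$; if $b_{2n+1}=0$ then $g_{2n}'(0)\ge 0$, so $V_{2n+1}''(0)=A_1 g_{2n}'(0)\ge 0$. The main work lies in the ingredients of the identity $V_{2n+1}''(b_{2n+1}^+)=A_1 g_{2n}'(b_{2n+1})$, namely (i) producing the decaying particular solution $h_\infty$ of \eqref{eq:2n+1}, which relies on the decay of $V_{2n}$ guaranteed by $2n\in\I$ and Proposition \ref{p:ODE}, and (ii) justifying the It\^o/Feynman--Kac identity $g_{2n}(\beta)=\beta+h_\beta(\beta)-C$ --- i.e.\ the martingale property of the stochastic integral and the vanishing of the discounted terminal term; everything else is routine.
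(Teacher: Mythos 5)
Your proof is correct, but it takes a genuinely different route from the paper's. The paper works ``from the probabilistic side'': it uses the first-order condition $g_{2n}'(b_{2n+1})=0$ in the form $-1=\E_0[e^{-\delta_1 T_1}V_{2n}'(b_{2n+1}+X^0_{T_1})]$, evaluates that expectation explicitly via the Borodin--Salminen density of reflected Brownian motion at an independent exponential time, integrates by parts to get a closed-form identity for $V_{2n}(b_{2n+1})$, and then substitutes this together with the explicit boundary value \eqref{eq:boundary} from Proposition \ref{p:ODE} into the ODE \eqref{eq:2n+1} at $x=b_{2n+1}$ to read off $V_{2n+1}''(b_{2n+1})=0$ (resp.\ $\geq 0$). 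You instead prove the structural identity $V_{2n+1}''(b_{2n+1}^+)=A_1\,g_{2n}'(b_{2n+1})$ by perturbing the barrier: the one-parameter family $h_\beta=h_\infty+c(\beta)e^{-A_1x}$ of decaying solutions, differentiation of the Neumann condition $h_\beta'(\beta)\equiv-1$, and a Feynman--Kac identity $h_\beta(\beta)=g_{2n}(\beta)-\beta+C$ tying the ODE solution to the return function of the barrier-$\beta$ strategy. This is the classical ``smooth pasting equals first-order optimality'' argument; it avoids both the Borodin--Salminen density and formula \eqref{eq:boundary}, handles the two cases $b_{2n+1}>0$ and $b_{2n+1}=0$ uniformly, and makes transparent why the sign of $V_{2n+1}''$ at the barrier is that of $g_{2n}'$ --- at the price of having to justify the It\^o/verification step (which the paper anyway needs implicitly in defining $g_{2n}$ and in Lemma \ref{l:V_n solves HJB}). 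One small inaccuracy: $\delta_1+\lambda_1>0$ is the standing assumption of Section 3, not a consequence of $2n\in\I$; this does not affect the argument.
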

\begin{proof}
Assume first that $b_{2n+1}>0$ and let $T$ be an Exp($\lambda_1+\delta_1$)-distributed random variable which is independent of $(X,Y^0)$. Then, we have $g_{2n}'(b_{2n+1})=0$ and, hence,
  \begin{align*}
    -1 &= \E_0[e^{-\delta_1 T_1}V'_{2n}(b_{2n+1}+X^{0}_{T_1})] \\
      &= \frac{\lambda_1}{\lambda_1+\delta_1}\int_0^\infty (\lambda_1+\delta_1)e^{-t(\lambda_1+\delta_1)}\E_0[V'_{2n}(b_{2n+1}+X^{0}_{t})]\md t \\
     &= \frac{\lambda_1}{\lambda_1+\delta_1}\E_0[V'_{2n}(b_{2n+1}+X^{0}_{T})] \\
     &= \frac{\lambda_1}{\lambda_1+\delta_1} \int_0^\infty V'_{2n}(b_{2n+1}+y)\frac{2(\lambda_1+\delta_1)}{\sigma^2A_1}e^{-y\tilde A_1}\md y \\
     &= \frac{2\lambda_1}{\sigma^2A_1}\left(\int_0^\infty V_{2n}(b_{2n+1}+y)\tilde A_1e^{-\tilde A_1 y}\md y-V_{2n}(b_{2n+1})\right)
  \end{align*}
  where we used that the density of $X^{0}_{T_1}$ is $\rho(y)=\frac{2(\lambda_1+\delta_1)}{\sigma^2A_1}e^{-\tilde A_1y}$, $y\ge 0$ given in Borodin and Salminen \cite[p. 252]{bs}, Formula 1.2.6. Thus, we get
   $$ V_{2n}(b_{2n+1}) = \frac{\sigma^2A_1}{2\lambda_1}+\tilde A_1\int_{b_{2n+1}}^\infty V_{2n}(z)e^{(b_{2n+1}-z)\tilde A_1}\md z\;.$$
Rewriting the ODE \eqref{eq:2n+1} and inserting for $V_{2n+1}(b_{2n+1})$ the value given in \eqref{eq:boundary}, calculated in Proposition \ref{p:ODE}, yields
  \begin{align*}
    \frac{\sigma^2}{2}V_{2n+1}''(b_{2n+1}) &= \mu + (\lambda_1+\delta_1)V_{2n+1}(b_{2n+1}) - \lambda_1 V_{2n}(b_{2n+1}) \\
    &= 0.
  \end{align*}
Now assume that $b_{2n+1}=0$. Then $g_{2n}$ attains its minimum in $0$ and $g_{2n}'(b_{2n+1})\geq 0$. Thus, we have
  \begin{align*}
    -1 &\leq \E_0[e^{-\delta_1 T_1}V'_{2n}(X^{0}_{T_1})] \\
       &= \frac{2\lambda_1}{\sigma^2A_1}\left(\int_0^\infty V_{2n}(y)(\tilde A_1)e^{-y\tilde A_1}\md y-V_{2n}(0)\right)
  \end{align*}
  which implies that
  $$ V_{2n}(0) \leq \frac{\sigma^2A_1}{2\lambda_1}+\tilde A_1\int_{b_{2n+1}}^\infty V_{2n}(z)e^{(b_{2n+1}-z)\tilde A_1}\md z.$$
  Hence, we get
   $$ V_{2n+1}''(0) \geq 0. $$
\end{proof}
Finally, we find that $\I=\N$ and, hence, $V_n$ is a convex, twice continuously differentiable, decreasing and positive valued function.
\begin{cor}\label{k:I=N}
It holds $\I = \N$.
\end{cor}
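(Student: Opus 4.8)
The plan is to establish $\I=\N$ by induction on $n$, proving that $n\in\I$ implies $n+1\in\I$; the base case $0\in\I$ has already been recorded for $V_0(x)=\tfrac1{A_2}e^{-A_2x}$. So fix $n\in\I$, so that $V_n$ is a positive, convex, strictly decreasing $\mathcal C^2$-function that vanishes at infinity. By the inductive hypothesis the recursion at step $n+1$ is well defined (when $n$ is even, convexity and monotonicity of $V_n$ give $g_n$ a unique minimiser $b_{n+1}$; when $n$ is odd, $b_{n+1}:=0$), and Proposition \ref{p:ODE} guarantees that the ODE \eqref{eq:2n+1} (if $n$ is even) or \eqref{eq:2n} (if $n$ is odd), together with $V_{n+1}'(b_{n+1})=-1$, $\lim_{x\to\infty}V_{n+1}(x)=0$, and the affine continuation below $b_{n+1}$, determines $V_{n+1}$. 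Write $\mathcal L_i$ for the operator governing $V_{n+1}$, so $i=1$ with constant $\delta_1+\lambda_1>0$ if the successor state is $\delta_1$, and $i=2$ with constant $\delta_2+\lambda_2>0$ if it is $\delta_2$ (both constants are positive by the standing Assumption). A standard bootstrap on the ODE upgrades $V_{n+1}$ to a $\mathcal C^4$-function on $(b_{n+1},\infty)$, which is all the smoothness the argument needs.

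The engine of the induction is the elementary maximum principle for $\mathcal L_i$ available because $\delta_i+\lambda_i>0$: a function $w$ with $\mathcal L_i(w)>0$ on an open interval has no interior maximum at a point where $w\ge0$, and a function $w$ with $\mathcal L_i(w)\le0$ has no interior minimum at a point where $w<0$. Differentiating the ODE once, $u:=V_{n+1}'$ satisfies $\mathcal L_i(u)=-\lambda_iV_n'>0$ on $(b_{n+1},\infty)$ since $V_n'<0$; as $u(b_{n+1})=-1<0$ and $u(x)\to0$ at infinity (the growing homogeneous mode has coefficient zero because $V_{n+1}$ decays, hence so do $V_{n+1}'$ and $V_{n+1}''$), the maximum principle forces $u<0$ on $(b_{n+1},\infty)$, and $u\equiv-1$ on $[0,b_{n+1})$, so $V_{n+1}'<0$ everywhere. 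Differentiating twice, $v:=V_{n+1}''$ satisfies $\mathcal L_i(v)=-\lambda_iV_n''\le0$ on $(b_{n+1},\infty)$ since $V_n''\ge0$; as $v(x)\to0$ and $v(b_{n+1})\ge0$ (justified below), the minimum principle forces $v\ge0$ on $(b_{n+1},\infty)$, and $v\equiv0$ on $[0,b_{n+1})$, so $V_{n+1}''\ge0$ everywhere. Finally $V_{n+1}>0$ because $V_{n+1}$ is strictly decreasing with limit $0$. Hence $n+1\in\I$.

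It remains to supply the boundary value $V_{n+1}''(b_{n+1})\ge0$. If the successor state is $\delta_1$ this is precisely Lemma \ref{l:ODE vanish}, which also provides the $\mathcal C^2$-regularity of $V_{n+1}$ across a possibly positive barrier. If the successor state is $\delta_2$, then $b_{n+1}=0$ and $V_{n+1}$ solves a linear ODE with continuous coefficients on $[0,\infty)$, hence is automatically $\mathcal C^2$ there; I would argue $V_{n+1}''(0)\ge0$ as follows. Here $n$ is odd and $V_n$ governs the $\delta_1$-state, so by the inductive hypothesis $V_n$ is convex and, by construction, $V_n'(b_n)=-1$, which gives $V_n'\ge-1$. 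Put $f:=V_{n+1}'+1$; differentiating the ODE once yields $\mathcal L_2(f)=-\lambda_2V_n'-(\delta_2+\lambda_2)\le\lambda_2-(\delta_2+\lambda_2)=-\delta_2<0$, while $f(0)=0$ and $f(x)\to1$. The minimum principle for $\mathcal L_2$ then excludes a negative interior minimum of $f$, so $f\ge0$, i.e.\ $V_{n+1}'\ge-1$. Since $V_{n+1}'(0)=-1$, the function $V_{n+1}'$ attains its minimum at the origin, whence its one-sided derivative there, $V_{n+1}''(0)$, is nonnegative.

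Combining the two cases closes the induction and gives $\I=\N$. I expect the genuine obstacle to be the boundary second-derivative estimate: in the $\delta_2$-step it has been reduced, as above, to the value-function bound $V_{n+1}'\ge-1$, which is clean once $\delta_2>0$ is used; in the $\delta_1$-step it is exactly Lemma \ref{l:ODE vanish}, where the possible strict positivity of the barrier and the use of the explicit hitting density of the reflected drift (Borodin and Salminen \cite{bs}) make the argument the most delicate point. The remaining ingredients — the two applications of the maximum principle, the decay of $V_{n+1}'$ and $V_{n+1}''$ at infinity, and positivity — are routine under the inductive hypothesis $V_n\in\I$.
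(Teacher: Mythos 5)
Your proof is correct, and its inductive skeleton -- base case $0\in\I$, an even/odd induction step, and Lemma \ref{l:ODE vanish} supplying the boundary value $V_{2n+1}''(b_{2n+1})\ge 0$ across a possibly positive barrier -- coincides with the paper's. Where you genuinely diverge is in how the defining properties of $\I$ are propagated. The paper's two-line proof outsources everything to Proposition \ref{p:ODE}: positivity, decay of the first two derivatives, and above all the structure of the set $J=\{x:V''(x)<0\}$ as an interval containing zero (empty when $\delta\ge 0$ and $b=0$), which together with $V_{2n+1}''(b_{2n+1})=0$ yields convexity on $[b_{2n+1},\infty)$. You instead differentiate the ODE once and twice and run an elementary maximum/minimum principle for $\mathcal L_i$ (available because $\delta_i+\lambda_i>0$) on $u=V_{n+1}'$ and $v=V_{n+1}''$, so that $V_{n+1}'<0$ and $V_{n+1}''\ge 0$ follow directly from the inductive sign information $V_n'<0$, $V_n''\ge 0$ plus the boundary and decay data; for the even successor you moreover replace the paper's ``$\delta\ge0$, $b=0\Rightarrow J=\emptyset$'' clause by a new argument, namely $\mathcal L_2(V_{n+1}'+1)\le-\delta_2<0$ via $V_n'\ge-1$, whence $V_{n+1}'\ge-1$ and $V_{n+1}''(0)\ge0$. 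Both routes carry the same load; yours is more self-contained at the level of the corollary and makes the convexity/monotonicity hypotheses on $V_n$ do visible work, while the paper factors the ODE analysis into a reusable standalone proposition. The two points you must (and do) keep explicit are the $\mathcal C^4$ bootstrap that legitimises differentiating the ODE twice and the vanishing of $V_{n+1}'$ and $V_{n+1}''$ at infinity, both of which are also recorded in Proposition \ref{p:ODE}.
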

\begin{proof}
  Let $k\in \I$.
  
  \underline{Case 1}: $k$ is even. Then there is an $n\in \mathbb N$ such that $k=2n$. Lemma \ref{l:ODE vanish} together with Proposition \ref{p:ODE} yield that $V_{2n+1}$ is a convex twice continuously differentiable function and, hence, $k+1=2n+1\in \I$.
  
  \underline{Case 2}: $k$ is odd. Then there is $n\in\mathbb N$ such that $k=2n+1$. Proposition \ref{p:ODE} yields that $k+1 = 2n+2\in\I$.
\\Since $0\in \I$ we get $\I=\mathbb N$.
\end{proof}
With the preceding at hand we can now prove the (pointwise) monotonicity of the sequences $(V_{2n})_{n\in\mathbb N}$, $(V_{2n+1})_{n\in\mathbb N}$ and $(b_{2n+1})_{n\in\mathbb N}$.
\begin{lem}\label{l:monotonie}
For any $n\in\mathbb N$, $x\geq 0$ we have $V_{n+2}(x)\geq V_{n}(x)$ and we have $b_{n+2} \leq b_n$.
\end{lem}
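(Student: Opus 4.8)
The plan is to prove the two assertions jointly, by induction on $n$, from the recursive construction of $(V_n)$ and $(b_n)$. Since $b_{2n}=b_{2n+2}=0$ for every $n$, the barrier inequality is vacuous for even $n$, so the lemma reduces to the two implications, for all $n\ge 0$: \emph{(A)} $V_{2n-2}\le V_{2n}$ on $[0,\infty)$ $\Rightarrow$ $b_{2n+1}\le b_{2n-1}$ and $V_{2n-1}\le V_{2n+1}$; \emph{(B)} $V_{2n-1}\le V_{2n+1}$ $\Rightarrow$ $V_{2n}\le V_{2n+2}$; here we use the conventions $V_{-2}:=V_{-1}:=0$, $b_{-1}:=\infty$, so that (A) is trivial for $n=0$ (it follows from the positivity of $V_1$, Corollary~\ref{k:I=N}) and (B) for $n=0$ is the first nontrivial step $V_0\le V_2$. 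The chain (A)$\to$(B)$\to$(A)$\to\cdots$, started from the trivial inclusion $V_{-2}\le V_0$, then closes the induction.

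The value-function inequalities are cleanest probabilistically. Because the barrier $0$ is optimal in the $\delta_2$-state, $V_{2k}(x)=V_0(x)+\E_x[e^{-\delta_2T_2}V_{2k-1}(X^0_{T_2})]$ for $k\ge 0$; hence $V_0\le V_2$ (using $V_1\ge 0$ from Corollary~\ref{k:I=N}) and, once $V_{2n-1}\le V_{2n+1}$ is known, $V_{2n+2}(x)-V_{2n}(x)=\E_x[e^{-\delta_2T_2}(V_{2n+1}-V_{2n-1})(X^0_{T_2})]\ge 0$, which is (B). For the $V$-part of (A) I would use the Feynman--Kac representation $V_{2n+1}(x)=\E_x\big[\int_0^{T_1}e^{-\delta_1t}\md Y^{b_{2n+1}}_t+e^{-\delta_1T_1}V_{2n}(X^{b_{2n+1}}_{T_1})\big]$ for the solution of \eqref{eq:2n+1} with reflecting condition at $b_{2n+1}$ (here $X^{b_{2n+1}}$ and $Y^{b_{2n+1}}$ denote the surplus and the cumulative injections under reflection at $b_{2n+1}$), together with the fact that $V_{2n-1}$, being the value function of the $(2n-1)$-problem, is dominated by the cost of the admissible barrier-$b_{2n+1}$ strategy with continuation $V_{2n-2}$; subtracting gives $V_{2n+1}(x)-V_{2n-1}(x)\ge\E_x[e^{-\delta_1T_1}(V_{2n}-V_{2n-2})(X^{b_{2n+1}}_{T_1})]\ge 0$. (Alternatively a comparison principle: since $\delta_i+\lambda_i>0$, any $w\in\mathcal C^2([0,\infty))$ with $\mathcal L_i(w)\le 0$, $w'(0)=0$, $w(\infty)=0$ is $\ge 0$, because at a negative minimum -- interior, or at $0$ where $w'(0)=0$ -- one would have $\mathcal L_i(w)\ge(\delta_i+\lambda_i)|w|>0$.)

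The barrier inequality in (A) is the crux. As $g_{2n}$ is convex with $g_{2n}'$ nondecreasing and $g_{2n}'(\infty)=1$, one has $b_{2n+1}=\min\{b\ge 0:\,g_{2n}'(b)\ge 0\}$, so it suffices to show $g_{2n}'(b_{2n-1})\ge 0$. Writing $g_{2n}=g_{2n-2}+\psi$ with $\psi(b)=\E_0[e^{-\delta_1T_1}(V_{2n}-V_{2n-2})(b+X^0_{T_1})]$ and using $g_{2n-2}'(b_{2n-1})\ge 0$, this becomes $\psi'(b_{2n-1})\ge 0$. Differentiating under the expectation, inserting the density $\rho(y)=\tfrac{2(\lambda_1+\delta_1)}{\sigma^2A_1}e^{-\tilde A_1y}$ of $X^0$ at an $\mathrm{Exp}(\lambda_1+\delta_1)$-time (as in Lemma~\ref{l:ODE vanish}) and integrating by parts (valid since $h:=V_{2n}-V_{2n-2}\ge 0$ vanishes at infinity), $\psi'(b_{2n-1})\ge 0$ is equivalent to
\[
\tilde A_1\int_0^\infty h(b_{2n-1}+y)\,e^{-\tilde A_1y}\md y\ \ge\ h(b_{2n-1}).
\]
To obtain this resolvent-type monotonicity of $h$ I would combine the representation $h(x)=\E_x[e^{-\delta_2T_2}(V_{2n-1}-V_{2n-3})(X^0_{T_2})]$ with the shape of $V_{2n-1},V_{2n-3}$ from Corollary~\ref{k:I=N} (convex, decreasing, affine with slope $-1$ below their barriers) and the inductive bound $b_{2n-1}\le b_{2n-3}$, which makes $V_{2n-1}-V_{2n-3}$ constant on $[0,b_{2n-1}]$ and nondecreasing on $[b_{2n-1},b_{2n-3}]$.

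The step I expect to be the main obstacle is precisely the displayed inequality: the value-function comparisons each reduce to one positivity or maximum-principle argument, but the barrier comparison needs the induction hypothesis in a stronger, ``integrated'' form -- a monotonicity of $V_{2n}-V_{2n-2}$ under the resolvent of the killed drifted Brownian motion, not just its positivity -- so both conclusions of the lemma must be carried through a single interlocked induction rather than proved in sequence.
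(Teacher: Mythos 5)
Your overall architecture -- an interlocked induction in which the even-index value comparison feeds the odd step and vice versa, with the barrier comparison carried along -- is the same as the paper's (which runs the induction over a set $\I_2$ of indices for which both claims hold). Your value-function comparisons are sound and essentially equivalent to the paper's: where you use the representations $V_{2n+2}(x)-V_{2n}(x)=\E_x[e^{-\delta_2 T_2}(V_{2n+1}-V_{2n-1})(X^0_{T_2})]$ and the suboptimality, for the $(2n-1)$st problem, of the barrier-$b_{2n+1}$ strategy with continuation $V_{2n-2}$, the paper invokes the comparison principle of \cite{walter.98} together with an auxiliary solution $W_{n+1}$ of the $V_{2n}$-driven ODE anchored at the smaller barrier; the content is identical, and your reduction of $b_{2n+1}\le b_{2n-1}$ to $g_{2n}'(b_{2n-1})\ge 0$ and then, via the density $\rho$ and integration by parts, to the displayed resolvent inequality is also correct.

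The genuine gap is exactly where you predicted it: the inequality $\tilde A_1\int_0^\infty h(b_{2n-1}+y)e^{-\tilde A_1 y}\md y\ge h(b_{2n-1})$ with $h=V_{2n}-V_{2n-2}$ is never proved. Positivity of $h$ cannot suffice: $h$ satisfies $h'(0)=0$, $h\ge 0$ and $h(\infty)=0$, so it is not monotone, and an exponential average of such an $h$ over $[b,\infty)$ can be strictly smaller than $h(b)$ when $b$ lies near the maximum of $h$. Your sketched route -- pushing the ``constant, then nondecreasing'' shape of $V_{2n-1}-V_{2n-3}$ through the resolvent of the reflected process -- only controls that difference on $[0,b_{2n-3}]$; beyond $b_{2n-3}$ both functions solve their respective ODEs and nothing in Corollary~\ref{k:I=N} fixes the sign of $(V_{2n-1}-V_{2n-3})'$ there, nor is it shown that the map $x\mapsto\E_x[e^{-\delta_2 T_2}f(X^0_{T_2})]$ transports the partial monotonicity of $f$ into the inequality needed for $h$. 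So the barrier half of the lemma remains unestablished. For what it is worth, the paper is no more explicit at this point: its Case 2 disposes of the barrier comparison with the single sentence ``Since $n\in\I_2$ we get $b_{n+3}\leq b_{n+1}$'', which, as $b_{n+2}=b_n=0$ for even $n$, amounts to deducing barrier monotonicity from $V_{n+2}\ge V_n$ alone -- precisely the implication your reduction exposes as non-trivial (equivalently, via Corollary~\ref{c:2nd.abl.wachs} and Lemma~\ref{l:ODE vanish}, one would need the second derivative at $b_{2n-1}$ of the $V_{2n}$-driven barrier solution to dominate that of the $V_{2n-2}$-driven one, which is again the same resolvent inequality). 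A complete proof needs a strengthened induction hypothesis -- e.g.\ a derivative or single-crossing comparison between $V_{n+2}$ and $V_n$ -- which neither your proposal nor the paper supplies.
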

\begin{proof}
  Let $\I_2 := \{n\in \mathbb N: b_{n+2} \leq b_n,\forall x\geq0 : V_{n+2}(x)\geq V_{2n}(x)\}$.
\\We show that $0\in\I_2$. Simply observe that
   \begin{align*}
      &\frac{\sigma^2}{2}V_{0}''+\mu V_0' -(\lambda_2+\delta_2) V_0 = 0, \\
      &\frac{\sigma^2}{2}V_{2}''+\mu V_2' -(\lambda_2+\delta_2) V_2 + \lambda_2 V_1 = 0, 
   \end{align*}
$V_1$ is strictly positive, $V_2'(0)=-1=V_0'(0)$ and $V_2(0) > V_0(0)$. Hence, \cite{walter.98} yields that $V_2(x) > V_0(x)$ for any $x\geq 0$. Consequently, $0\in \I_2$. 
\\Now let $n\in \I_2$.
   
  \underline{Case 1:} $n$ is odd. Then $n+1$ is even and, hence, we have
   \begin{align*}
      &\frac{\sigma^2}{2}V_{n+1}''+\mu V_{n+1}' -(\lambda_2+\delta_2) V_{n+1} +\lambda_2 V_{n} = 0, \\
      &\frac{\sigma^2}{2}V_{n+3}''+\mu V_{n+3}' -(\lambda_2+\delta_2) V_{n+3} + \lambda_2 V_{n+2} = 0, 
   \end{align*}
$V_{n+2}\geq V_{n}$, $V_{n+3}'(0)=-1=V_{n+1}'(0)$ and $V_{n+3}(0) > V_{n+1}(0)$. Hence, \cite{walter.98} yields that $V_{n+3}(x) > V_{n+1}(x)$ for any $x\geq 0$. Since $b_{n+3}=0=b_{n+1}$ we have $n+1\in \I_2$. 
   
  \underline{Case 2:} $n$ is even. Then $n+1$ is odd. Since $n\in \I_2$ we get $b_{n+3}\leq b_{n+1}$. Let $W_{n+1}$ be the solution to the ODE
   $$ \frac{\sigma^{2}}{2}W_{n+1}'' + \mu W_{n+1}' -(\lambda_1+\delta_1)W_{n+1} + \lambda_1 V_{n} = 0$$
   with $\lim\limits_{x\rightarrow\infty} W_{n+1}(x) = 0$ and $W_{n+1}'(b_{n+3})=-1$. Then $W_{n+1}\geq V_{n+1}$ on $[b_{n+3},\infty)$. Also, the comparison principle \cite{walter.98} yields that $V_{n+3}\geq W_{n+3}$ on $[b_{n+3},\infty)$ and, hence, $V_{n+3} \geq V_{n+1}$ on $[b_{n+3},\infty)$. Since $V_{n+3}$, $V_{n+1}$ are linear with slope $-1$ on $[0,b_{n+3}]$ we get that $V_{n+3}\geq V_{n+1}$ on $\mathbb R_+$. Thus, we have $n+1\in \I_2$.
\medskip
\\Consequently, $\I_2 = \mathbb N$ which is the claim.
\end{proof}
With all the properties at hand we can show that $V_n$ is the value function of the modified control problem introduced at the beginning of the section.
\begin{lem}\label{l:V_n solves HJB}
We have
\begin{align*} 
\min\Big\{\mathcal L_j(V_{2n+j})(x)+\lambda_j V_{2n+j-1}(x),V_{2n+j}'(x)+1\Big\} = 0
\end{align*}
for any $n\in\mathbb N$, $x\geq0$, $j\in\{1,2\}$. 
In other words we have
\begin{align*}   
V_{2n+j}(x) = \inf\limits_{Y\in\mathcal A} \E_x\left[\int_0^{T_j}e^{-\delta_j s}\md Y_s+e^{-\delta_j T_j}V_{2n+j-1}(X^Y_{T_j})\right]
\end{align*}
where $T_j$ is an $(X,Y)$-independent ${\rm Exp}(\lambda_j)$-distributed random variable.
\end{lem}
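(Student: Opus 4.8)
\noindent\textit{Proof strategy.} The plan is to argue in two steps: first, to show that each $V_{2n+j}$ solves the variational inequality in the statement; second, to run a verification argument showing that a $\mathcal C^2$-solution of that inequality with the right growth at infinity is automatically the value function of the indicated one-step problem. Throughout I rely on Corollary \ref{k:I=N}, by which every $V_n$ is a positive, strictly decreasing, convex $\mathcal C^2$-function vanishing at infinity; in particular $0\le V_n\le V_n(0)$, and since $V_n'(b_n)=-1$ with $V_n'$ increasing up to $0$, also $|V_n'|\le 1$ everywhere.

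\emph{Step 1: the HJB equation.} For $j=2$ one has $b_{2n+2}=0$, so by construction the first entry $\mathcal L_2(V_{2n+2})+\lambda_2V_{2n+1}$ vanishes on all of $[0,\infty)$ (this is \eqref{eq:2n}), while convexity and $V_{2n+2}'(0)=-1$ give $V_{2n+2}'\ge -1$; hence the minimum equals $0$. For $j=1$ the identical argument shows the minimum equals $0$ on $[b_{2n+1},\infty)$, using \eqref{eq:2n+1} and $V_{2n+1}'(b_{2n+1})=-1$. On the injection region $[0,b_{2n+1})$, which is non-empty only if $b_{2n+1}>0$, we have $V_{2n+1}'\equiv -1$, so the only remaining task is to check $\phi(x):=\mathcal L_1(V_{2n+1})(x)+\lambda_1V_{2n}(x)\ge0$ there. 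Since $V_{2n+1}$ is affine on $[0,b_{2n+1}]$, $\phi$ is an affine function plus $\lambda_1V_{2n}$, hence convex; Lemma \ref{l:ODE vanish} together with \eqref{eq:2n+1} evaluated at $b_{2n+1}$ gives $\phi(b_{2n+1})=0$; and $\phi'(b_{2n+1})=(\lambda_1+\delta_1)+\lambda_1V_{2n}'(b_{2n+1})$. The first-order condition defining $b_{2n+1}$, namely $\E_0[e^{-\delta_1T_1}V_{2n}'(b_{2n+1}+X^0_{T_1})]=-1$, combined with $X^0_{T_1}\ge 0$, the monotonicity of $V_{2n}'$, and $\E_0[e^{-\delta_1T_1}]=\tfrac{\lambda_1}{\lambda_1+\delta_1}\in(0,\infty)$ (finite thanks to the standing assumption $\delta_1>-\lambda_1$), forces $V_{2n}'(b_{2n+1})\le -\tfrac{\lambda_1+\delta_1}{\lambda_1}$, whence $\phi'(b_{2n+1})\le 0$. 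A convex function that vanishes at the right endpoint of an interval and has nonpositive derivative there is nonnegative throughout that interval, so $\phi\ge0$ on $[0,b_{2n+1}]$ and Step 1 is complete.

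\emph{Step 2: identification with the value function.} Fix $j$, abbreviate $V:=V_{2n+j}$, $U:=V_{2n+j-1}$, $\beta:=\delta_j+\lambda_j$, $b:=b_{2n+j}$, and let $T_j\sim{\rm Exp}(\lambda_j)$ be independent of $(X,Y)$. Conditioning on $T_j$ and using Tonelli, the functional to be minimised equals, for every $Y\in\mathcal A$,
\[
\E_x\Big[\int_0^\infty e^{-\beta s}\md Y_s+\lambda_j\int_0^\infty e^{-\beta t}U(X^Y_t)\md t\Big].
\]
I would then apply It\^o's formula to $e^{-\beta t}V(X^Y_t)$ along $X^Y_t=x+\mu t+\sigma W_t+Y_t$; since $V\in\mathcal C^2$ and the first-order jump terms $V'(X^Y_{s-})\Delta Y_s$ cancel against the jump part of $\int V'(X^Y_{s-})\md Y_s$, one is left with the drift $\int_0^te^{-\beta s}\mathcal L_j(V)(X^Y_s)\md s$, a martingale part $\int_0^te^{-\beta s}\sigma V'(X^Y_s)\md W_s$, the term $\int_0^te^{-\beta s}V'(X^Y_s)\md Y^c_s$, and the jump sum $\sum_{0<s\le t}e^{-\beta s}\big(V(X^Y_s)-V(X^Y_{s-})\big)$. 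Step 1 gives $\mathcal L_j(V)\ge -\lambda_jU$ and $V'\ge -1$, so $V'\md Y^c\ge -\md Y^c$ and, because jumps of $Y$ are nonnegative, $V(X^Y_s)-V(X^Y_{s-})\ge -\Delta Y_s$; since $V$ and $V'$ are bounded, the stochastic integral is a true martingale. Rearranging, taking expectations, and letting $t\to\infty$ — using $0\le\E_x[e^{-\beta t}V(X^Y_t)]\le V(0)e^{-\beta t}\to0$ and monotone convergence for the two remaining nonnegative integrals — yields $V(x)\le\E_x[\int_0^\infty e^{-\beta s}\md Y_s+\lambda_j\int_0^\infty e^{-\beta t}U(X^Y_t)\md t]$, i.e.\ $V_{2n+j}(x)\le\inf_{Y\in\mathcal A}(\cdots)$. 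For the reverse inequality I would exhibit the optimiser: let $Y^*$ be the barrier strategy at level $b$, i.e.\ reflection of the surplus at $b$, preceded by a lump injection of size $b-x$ when $x<b$. Then $X^{Y^*}_t\ge b$ for all $t$, so $\mathcal L_j(V)(X^{Y^*}_t)=-\lambda_jU(X^{Y^*}_t)$ because the ODE holds on $[b,\infty)$; the continuous part $Y^{*,c}$ increases only on $\{X^{Y^*}=b\}$, where $V'=-1$; and any initial jump lands in $[0,b]$, where $V$ is affine with slope $-1$. Hence every inequality in the display above becomes an equality for $Y^*$, so $V(x)=\E_x[\cdots]$ for $Y^*$ and $V_{2n+j}$ is indeed the asserted infimum.

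\emph{Expected main obstacle.} The one genuinely non-routine point is the sign of $\phi'(b_{2n+1})$ in Step 1 — converting the first-order optimality of the barrier $b_{2n+1}$ for $g_{2n}$ into the pointwise estimate $V_{2n}'(b_{2n+1})\le-(\lambda_1+\delta_1)/\lambda_1$, which is where the recursive construction really enters. Step 2 is a standard verification argument, the only care being the integrability underpinning the martingale property and the passage to the limit $t\to\infty$.
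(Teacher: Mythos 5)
Your proof is correct, but on the decisive step it takes a genuinely different route from the paper. Both arguments agree on the easy parts ($j=2$ everywhere, and $j=1$ on $[b_{2n+1},\infty)$, where the ODE and convexity settle the matter). For the injection region $[0,b_{2n+1})$ the paper argues by contradiction: it compares $V_{2n+1}$ with the return function $\tilde V$ of the barrier strategy at level $x_0<b_{2n+1}$, invokes Corollary \ref{c:2nd.abl.wachs} (monotonicity of $b\mapsto V_b''(b)$) to get $\tilde V''(x_0)\le V_{2n+1}''(b_{2n+1})=0$, and uses the minimality of $g_{2n}$ at $b_{2n+1}$ to get $\tilde V(x_0)\ge V_{2n+1}(x_0)$, from which the sign contradiction follows. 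You instead work directly with $\phi:=\mathcal L_1(V_{2n+1})+\lambda_1V_{2n}$ on the affine region: it is convex there, vanishes at $b_{2n+1}$ by Lemma \ref{l:ODE vanish} combined with \eqref{eq:2n+1}, and has $\phi'(b_{2n+1})\le 0$ because the first-order condition $\E_0[e^{-\delta_1T_1}V_{2n}'(b_{2n+1}+X^0_{T_1})]=-1$, together with $X^0_{T_1}\ge0$ and the monotonicity of $V_{2n}'$, forces $V_{2n}'(b_{2n+1})\le-(\lambda_1+\delta_1)/\lambda_1$; convexity then gives $\phi\ge0$ below the barrier. This is more elementary — it bypasses Corollary \ref{c:2nd.abl.wachs} and the comparison with $\tilde V$ entirely — and it isolates exactly how the optimality of $b_{2n+1}$ enters, at the price of using the explicit Laplace transform $\E_0[e^{-\delta_1T_1}]=\lambda_1/(\lambda_1+\delta_1)$ (finite by the standing assumption $\delta_1>-\lambda_1$). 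Your Step 2 is a standard verification argument for the ``in other words'' identification, which the paper does not spell out in this proof (it treats the equivalence as known, cf.\ the references to \cite{es} and \cite{HS} used in Theorem \ref{t:convergence result}); your execution of it, including the treatment of the jump terms of $Y$ and the passage $t\to\infty$ using $\delta_j+\lambda_j>0$, is sound.
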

\begin{proof}
Let $n\in\mathbb N$, $x\geq0$ and $j\in\{1,2\}$. 
\\If $j=2$, then $V_{2n+j}'+1\geq 0$ and
  $$ \frac{\sigma^2}{2}V_{2n+j}''(x) + \mu V_{2n+j}'(x) -(\lambda_j+\delta_j) V_{2n+j}(x)+\lambda_j V_{2n+j-1}(x) = 0\;.$$
Therefore, the claim holds. Hence, we may assume that $j=1$. Recall that $V_{2n+1}$ solves the differential equation \eqref{eq:2n+1} for $x\in[b_{2n+1},\infty)$ and fulfils $V_{2n+1}(x)=V_{2n+1}(b_{2n+1})+b_{2n+1}-x$ for $x\in[0,b_{2n+1})$. 
\\If $x\geq b_{2n+1}$, then we can prove the claim like described in the first case. 

Now, assume by contradiction that there is $0\le x_0<b_{2n+1}$ such that
 $$ \frac{\sigma^2}{2}V_{2n+1}''(x_0) +\mu V_{2n+1}'(x_0)-(\delta_1+\lambda_1)V_{2n+1}(x_0) +\lambda_1 V_{2n}(x_0) < 0 $$
Let $\widetilde V$ be the solution to the ODE
 $$ \frac{\sigma^2}{2}\widetilde V''(x) +\mu \widetilde V'(x)-(\delta_1+\lambda_1)\widetilde V(x) +\lambda_1 V_{2n}(x) = 0$$
for $x\in[x_0,\infty)$ with $\widetilde V'(x_0) = -1$ and $\lim\limits_{x\rightarrow\infty}\widetilde V(x) = 0$, cf.\ Proposition \ref{p:ODE}. We also define $\tilde V(x) := \tilde V(x_0) + (x_0-x)$ for $x\in [0,x_0)$. Since $x_0< b_{2n+1}$ Corollary \ref{c:2nd.abl.wachs} yields $\tilde V''(x_0) \leq V''_{2n+1}(b_{2n+1}) = 0$ and the latter equality holds by Lemma \ref{l:ODE vanish}. $\tilde V$ is the performance function of the strategy with barrier $x_0$ until time $T_1$ and following the optimal strategy afterwards. $b_{2n+1}$ is chosen such that the expected discounted capital injections are minimised among barrier strategies if the initial capital is zero, i.e.\ $\tilde V(0) \geq V_{2n+1}(0)$. Thus, we get $\tilde V(x_0) \geq \tilde V_{2n+1}(x_0)$ by linearity with slope $-1$. Then, we have
\begin{align*}
  0 &> \frac{\sigma^2}{2}V_{2n+1}''(x_0) +\mu V_{2n+1}'(x_0)-(\delta_1+\lambda_1)V_{2n+1}(x_0) +\lambda_1 V_{2n}(x_0) \\
   &= -\mu-(\delta_1+\lambda_1)V_{2n+1}(x_0) +\lambda_1 V_{2n}(x_0) \\
   &\geq \mu \tilde V'(x_0)- (\delta_1+\lambda_1)\tilde V(x_0) +\lambda_1 V_{2n}(x_0) \\
   &= -\frac{\sigma^2}{2}\tilde V''(x_0) \geq 0\;,
\end{align*}
which is a contradiction. Consequently, we have 
$$ \frac{\sigma^2}{2}V_{2n+1}''(x) +\mu V_{2n+1}'(x)-(\delta_1+\lambda_1)V_{2n+1}(x) +\lambda_1 V_{2n}(x) \geq 0 $$
for any $x\in [0,b_{2n+1})$ which yields the claim.
\end{proof}
Finally, we come to the main statement of this section. Here, we prove that the optimal strategy for the initial control problem is indeed of barrier type.
\begin{thm}\label{t:convergence result}
The sequence $(V_{2n})_{n\in\mathbb N}$ converges together with its first two derivatives locally uniformly to $V(\cdot,\delta_2)$ and its derivatives and the sequence $(V_{2n+1})_{n\in\mathbb N}$ converges together with its first two derivatives locally uniformly to $V(\cdot,\delta_1)$ and its derivatives.
  
In particular, $V(\cdot,\delta_j)$ is a convex, decreasing, positive valued $\mathcal C^2$-function. If $b:=\lim\limits_{n\rightarrow\infty} b_{2n}>0$, then $V''(b,\delta_1)=0$.
The optimal strategy for the initial control problem is the function
$$ Y^*(t) := \sup_{s\in[0,t]} \max\{0,-\inf_{u\in[0,s]}X(u),(b-\inf_{u\in[0,s]}X(u))\one_{\{\ir_s=\delta_1\}}\},\quad t\geq 0.$$
\end{thm}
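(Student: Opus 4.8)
The plan is to pass to the monotone limit in the sequences $(V_n)_{n\in\mathbb N}$ and $(b_n)_{n\in\mathbb N}$, to upgrade that limit to a $\mathcal C^2$ solution of \eqref{HJB1} by a compactness argument, and finally to identify it with $V(\cdot,\delta_1),V(\cdot,\delta_2)$ and to prove optimality of $Y^*$ by a verification (It\^o) argument. For \emph{Step 1 (pointwise limits)}: by Corollary \ref{k:I=N} every $V_n$ is positive, decreasing, convex, $\mathcal C^2$ and vanishes at infinity, and by Lemma \ref{l:V_n solves HJB} it satisfies $V_n'\ge-1$, so $-1\le V_n'<0$. Lemma \ref{l:monotonie} shows that $(V_{2n})_n$ and $(V_{2n+1})_n$ are pointwise non-decreasing and $(b_{2n+1})_n$ non-increasing; put $b:=\lim_n b_{2n+1}\ge0$. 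Finiteness of the limits follows by dominating $V_{2n+j}$ by the cost of the minimal-amount strategy $Y^0$ in the $n$-th modified problem, hence by $V^0(\cdot,\delta_j)<\infty$ from Proposition \ref{p:well posed}. Thus $\bar V_j:=\lim_n V_{2n+j}$ exists, is finite, convex, non-increasing, positive, with $\bar V_j\le V^0(\cdot,\delta_j)$ and $\bar V_j(x)\to0$ as $x\to\infty$; write $\bar V(x,\delta_i):=\bar V_i(x)$.

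\emph{Step 2 (upgrade to $\mathcal C^2$ and to \eqref{HJB1}).} On each compact $[0,N]$ the $V_n$ are uniformly bounded by $V^0$ and $|V_n'|\le1$; rewriting the ODEs \eqref{eq:2n}, \eqref{eq:2n+1} expresses $V_n''$ as a fixed continuous combination of $V_n,V_n',V_{n-1}$ (and $V_n''=0$ on $[0,b_{2n+1})$ in the odd case), so $(V_n'')_n$ is uniformly bounded on $[0,N]$ too. Arzel\`a--Ascoli together with the pointwise convergence of Step 1 give $V_n\to\bar V_j$ and $V_n'\to\bar V_j'$ locally uniformly, and feeding this back into the ODEs yields local uniform convergence of $V_n''$; hence $\bar V_j\in\mathcal C^2$. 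Taking limits I obtain $\mathcal L_2(\bar V_2)(x)+\lambda_2\bar V_1(x)=0$ for $x\ge0$ with $\bar V_2'(0)=-1$; $\mathcal L_1(\bar V_1)(x)+\lambda_1\bar V_2(x)=0$ for $x\ge b$; $\bar V_1$ affine with slope $-1$ on $[0,b]$ and $\bar V_1'(b)=-1$; and, from the limit of Lemma \ref{l:ODE vanish}, $\bar V_1''(b)=0$ when $b>0$. Passing to the limit in Lemma \ref{l:V_n solves HJB} shows that $\bar V$ solves the HJB equation \eqref{HJB1}.

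\emph{Step 3 (verification and optimality of $Y^*$).} For an arbitrary $Y\in\mathcal A$, I would apply It\^o's formula to $e^{-\int_0^t\ir_s\md s}\bar V(X^Y_t,\ir_t)$; since the $\ir$-discounted generator of $(X^Y,\ir)$ acting on $\bar V(\cdot,\delta_i)$ equals $\mathcal L_i(\bar V_i)+\lambda_i\bar V_j$, the HJB inequalities $\mathcal L_i(\bar V_i)+\lambda_i\bar V_j\ge0$ and $\bar V_i'\ge-1$ (the latter, via $\bar V_i(y)-\bar V_i(x)\ge-(y-x)$, controlling the continuous part of $Y$ and all its jumps, including those forced at switching times) make $e^{-\int_0^t\ir_s\md s}\bar V(X^Y_t,\ir_t)+\int_0^t e^{-\int_0^s\ir_u\md u}\md Y_s$ a local submartingale. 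Localising and letting $t\to\infty$, using that $\bar V$ is bounded and $\E_{x,\eta}[e^{-\int_0^t\ir_s\md s}]\le Ce^{-ct}$ (the estimate from the proof of Proposition \ref{p:well posed}) to kill the boundary term and to justify Fatou/monotone convergence, one gets $\bar V(x,\eta)\le V^Y(x,\eta)$, hence $\bar V\le V$. Conversely, under $Y^*$ the controlled surplus is a diffusion reflected at $0$ in state $\delta_2$ and at $b$ in state $\delta_1$ (jumping up to $b$ at a $\delta_2\to\delta_1$ switch if it is below $b$), so $\md Y^*$ charges only $\{\ir=\delta_2,x=0\}\cup\{\ir=\delta_1,x=b\}$, where $\bar V'=-1$, and the switch jump contributes $\bar V_1(b)-\bar V_1(x)+(b-x)=0$ by the affineness of $\bar V_1$ on $[0,b]$; the corresponding process is therefore a genuine local martingale, and the same limiting argument yields $\bar V(x,\eta)=V^{Y^*}(x,\eta)$. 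Hence $\bar V=V=V^{Y^*}$, which gives the stated convergence (with the first two derivatives), the regularity of $V(\cdot,\delta_j)$, the relation $V''(b,\delta_1)=0$ for $b>0$, and the optimality of $Y^*$.

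\emph{Main obstacle.} I expect the delicate part to be the integrability bookkeeping in Step 3 — making the local (sub/super)martingale identities survive the passage $t\to\infty$, where the uniform bound $\bar V\le V^0$ and the exponential estimate for $\E_{x,\eta}[e^{-\int_0^t\ir_s\md s}]$ are exactly what is needed — together with verifying that the proposed $Y^*$ is admissible (right-continuity of the running supremum built from the c\`adl\`ag pair $(X,\ir)$) and genuinely realises the reflected dynamics at level $b$ in state $\delta_1$ and at level $0$ in state $\delta_2$, so that $\md Y^*$ is supported precisely on $\{\bar V'=-1\}$. The compactness upgrade of Step 2 should, by contrast, be routine once the uniform bound $|V_n'|\le1$ is in hand.
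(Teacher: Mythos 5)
Your proposal is correct and follows essentially the same route as the paper: monotone pointwise limits from Lemma \ref{l:monotonie} dominated by $V^0$, a uniform bound on $V_n''$ extracted from the ODEs to upgrade to locally uniform $\mathcal C^2$-convergence, passage to the limit in Lemma \ref{l:V_n solves HJB} to see that the limit solves \eqref{HJB1}, and then identification with $V$. The only differences are cosmetic: you invoke Arzel\`a--Ascoli where the paper uses its tailor-made convexity criterion (Proposition \ref{p:convergence criterion}), and you write out the It\^o verification argument of your Step 3, which the paper simply delegates to \cite{es} and \cite{HS}.
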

\begin{proof}
Lemma \ref{l:monotonie} yields that both sequences are monotone increasing and, hence, have a pointwise limit in $[0,\infty]$. Let denote those limits by
\[
U_2(x) := \lim_{n\rightarrow\infty} V_{2n}(x)\quad \mbox{and} \quad U_1(x) := \lim_{n\rightarrow\infty}V_{2n+1}(x)\;.
\]
Since $V_{2n}(x) \leq V^0(x,\delta_2)$ and $V_{2n+1}(x)\leq V^0(x,\delta_2)$ for any $n\in\mathbb N$, $x\geq 0$ we get $U_1(x) \leq V^0(x,\delta_1)<\infty$ and $U_2(x)\leq V^0(x,\delta_2)<\infty$ for $x\geq 0$.
Observe that we have
  \begin{align*}
  |V''_{2n}(x)| &\leq \frac{2}{\sigma^2}\left(\mu |V'_{2n}(x)|+(\delta_2+\lambda_2)|V_{2n}(x)|+\lambda_2|V_{2n-1}(x)|\right) \\
   & \leq \frac{2}{\sigma^2}\left(\mu+(\delta_2+\lambda_2)|V^0(0,\delta_2)|+\lambda_2|V^0(0,\delta_1)|\right).
\end{align*}   
Proposition \ref{p:convergence criterion} yields that the convergence is locally uniformly for the functions and their first derivative. Let $b:=\lim\limits_{n\rightarrow\infty}b_{2n+1}$. Then $V_{2n}$, $V_{2n+1}$ solve (for $n$ large enough) the differential equations \eqref{eq:2n} resp.\ \eqref{eq:2n+1} we conclude that $U_1, U_2$ are $\mathcal C^2$-functions on $(b,\infty)$ and for $x\in(b,\infty)$ we have
  \begin{align*}
     \frac{\sigma^2}{2}U_1''(x) + \mu U_1'(x) -(\lambda_1+\delta_1) U_1(x)+\lambda_1 U_2(x) &= 0 \\
     \frac{\sigma^2}{2}U_2''(x) + \mu U_2'(x) -(\lambda_2+\delta_2) U_2(x)+\lambda_2 U_1(x) &= 0.
  \end{align*}
Since $V_{2n+1}(x)$ are linear on $[0,b]$, we have $V''_{2n+1}(x) = 0 = U''_1(x)$ for $x\in [0,b]$. In particular, $V_{2n+1}$ converges locally uniformly on $\mathbb R_+$ together with its first two derivatives to $U_1$ and its first two derivatives. Thus, the same holds for the convergence of $V_{2n}$ to $U_2$.
  
Finally, Lemma \ref{l:V_n solves HJB} yields for $i,j\in\{1,2\}$ and $i\neq j$ that
\begin{align*}
\min\Big\{\mathcal L_i&(U_i)(x)+\lambda_i U_j(x),U_i'(x)+1\Big\} 
\\&= \lim_{n\rightarrow\infty} \min\Big\{\mathcal L_i(V_{2n+i})(x)+\lambda_i V_{2n+i-1}(x),V_{2n+i}'(x)+1\Big\}=0\;.
\end{align*}
Thus, $(U_1, U_2)$ is the classical solution to the HJB-equation and, hence, $U_1(x) = V(x,\delta_1)$ and $U_2(x)=V(x,\delta_2)$, confer for instance \cite{es} and \cite{HS}. 
\end{proof}
In the following example we illustrate our findings.
\begin{ex}
Consider the following parameters: $\delta_1 := -0.56$, $\delta_2 := 0.1$, $\lambda_1 := 0.57$, $\lambda_2 := 0$, $\mu := 0.05$ and $\sigma :=0.45$. 
\\We have chosen $\lambda_2=0$ for the sake of simplicity.
Consider at first $V^0$, the return function corresponding to the minimal-amount strategy, i.e.\ we apply $Y^0$ in both states.
In the left picture of Figure \ref{fig:1} one sees that the second derivative $(V^0)''(x,\delta_1)$ is negative in some interval close to $0$. In particular, it holds $(V^0)''(0,\delta_1)=-3.3077$. Thus, the strategy $Y^0$ cannot be optimal.
\\Since $\lambda_2=0$, we know that the value function, if starting in the state with $\delta_2>0$, is given by
\[
V(x,\delta_2)=\frac 1{A} e^{-A x}\quad\mbox{with}\quad A=\frac{\mu^2+\sqrt{\mu^2+2\sigma^2\delta_2}}{\sigma^2}\;.
\]
\begin{figure}[t]
\includegraphics[scale=0.3, bb = 0 150 300 650]{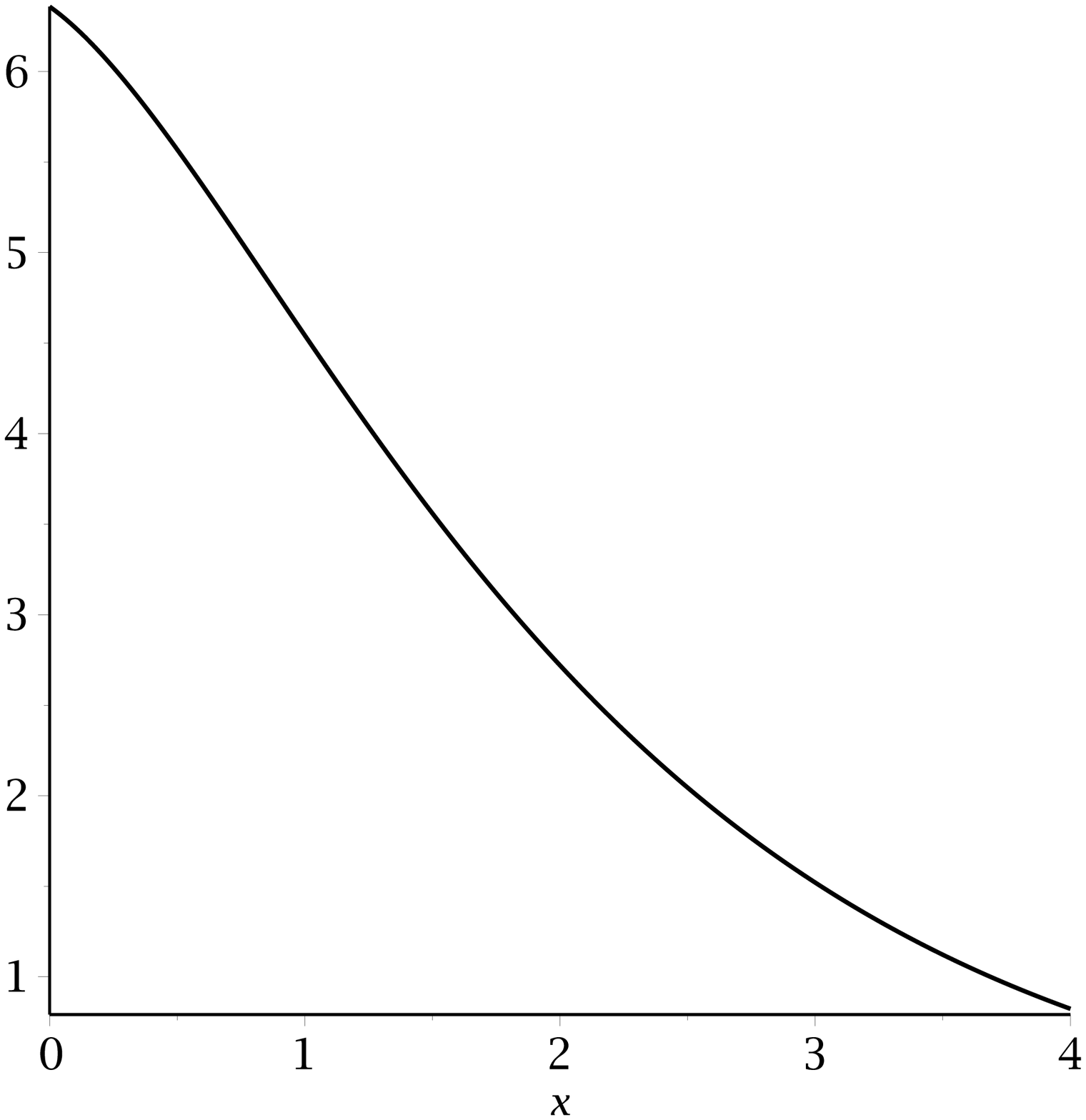}
\includegraphics[scale=0.3, bb = -300 150 0 650]{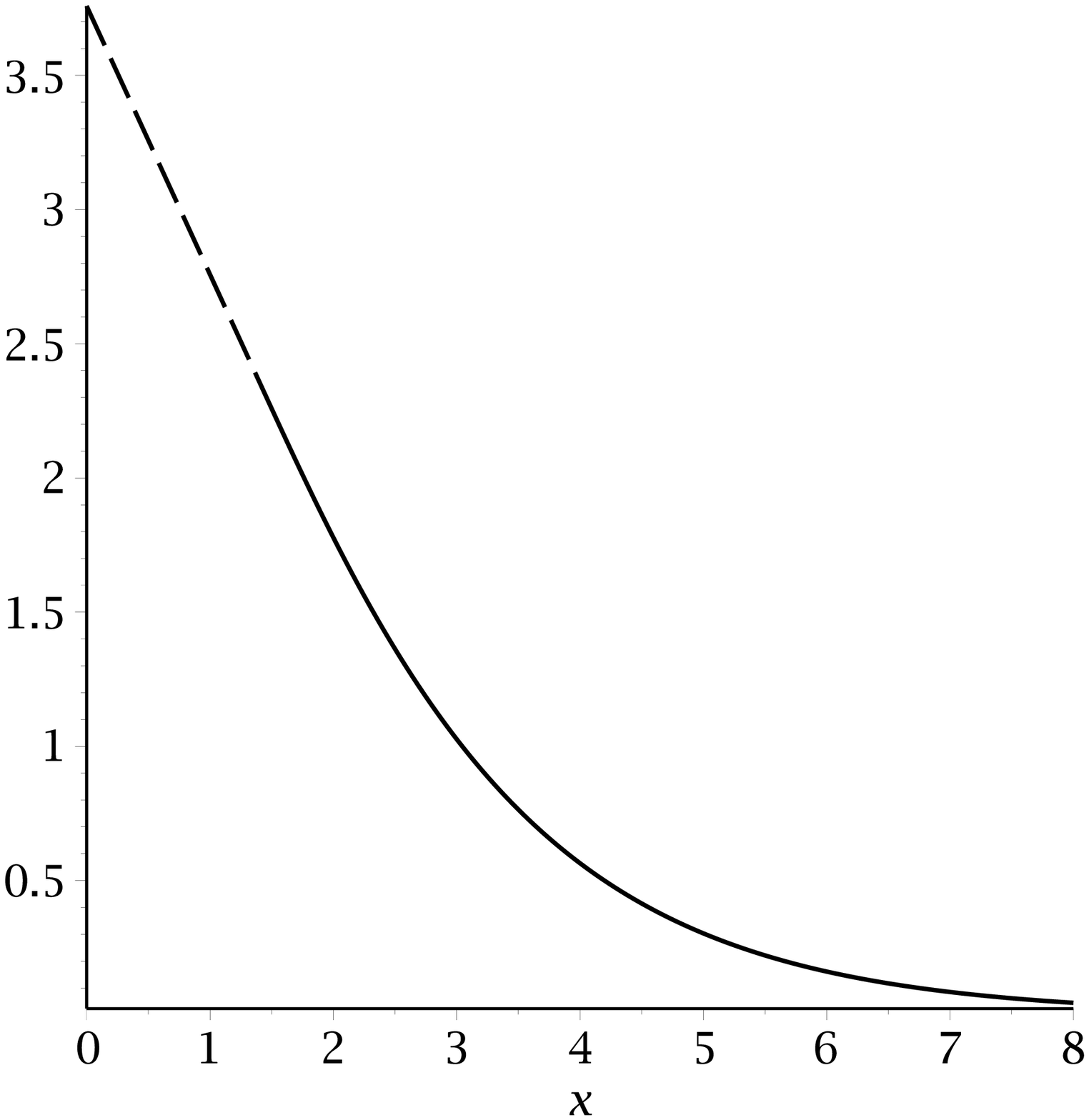}
\caption{The non-convex structure of $V^0(x,\delta_1)$ (left picture) and the value function $V(x,\delta_1)$ (right picture).\label{fig:1}}
\end{figure}
Thus, if we find the optimal barrier, we will be able to calculate the value function via the corresponding differential equation. The optimal constant barrier will minimise the expected discounted capital injections for every $x\in\R_+$. This means, we can choose $x=0$. Denoting the return function corresponding to some barrier $b$ by $V^b$, we obtain with $T_1\sim{ \rm Exp}(\lambda1)$:
\begin{align*}
V^b(0)&=b+V^b(b)=b+\E_0\Big[\int_0^{T_1}e^{-\delta_1 t}\md Y_t^0+e^{(\delta_2-\delta_1) T_1}V\big(b+X_{T_1}^{0},\delta_2\big)\Big]
\\&= b+\E_0\Big[\int_0^{T_1}e^{-\delta_1 t}\md Y_t^0\Big]+\frac{e^{-A b}}{A}\E\Big[e^{(\delta_2-\delta_1) T_1}e^{-A X_{T_1}^{0}}\Big]\;.
\end{align*}
Minimising $V^b(0)$ with respect to $b$, yields the condition
\begin{align*}
1=e^{-A b}\E\Big[e^{(\delta_2-\delta_1) T_1}e^{-A X_{T_1}^{0}}\Big]\;.
\end{align*} 
Since $\lambda_1+\delta_1-\delta_2\neq 0$, we have 
\[
\E\Big[e^{(\delta_2-\delta_1) T_1}e^{-A X_{T_1}^{0}}\Big] = \frac{\lambda_1}{\lambda_1+\delta_1-\delta_2}\frac{\sqrt{\mu^2+2\sigma^2(\lambda_1+\delta_1)}-\sqrt{\mu^2+2\sigma^2\delta_2}}{\mu+\sqrt{\mu^2+2\sigma^2(\lambda_1+\delta_1)}}\;,
\]
confer for instance Borodin and Salminen, \cite[p. 252]{bs}, and the optimal barrier $b^*$ is given by
\begin{align*}
b^*&=\frac 1A\ln\bigg(\frac{\lambda_1}{\lambda_1+\delta_1-\delta_2}\cdot\frac{\sqrt{\mu^2+2\sigma^2(\delta_1+\lambda_1)}-\sqrt{\mu^2 +2\sigma^2\delta_2}}{\mu+\sqrt{\mu^2+2\sigma^2(\delta_1+\lambda_1)}}\bigg)
\\&=1.4248\;.
\end{align*}
Using that $V'(b^*,\delta_1)=-1$ and $V''(b^*,\delta_1)=0$, we can calculate the value function $V(x,\delta_1)$ by solving
\[
\mathcal L_1(f)(x)+\lambda_1 V(x,\delta_2)=0,\quad x\in[b^*,\infty)\;.
\]
In the right picture of Figure \ref{fig:1} one sees $V(x,\delta_1)$, subdivided into the linear part on $[0,b^*]$ and the sum of two exponential functions on $(b^*,8)$.
\end{ex}

\appendix
\appendix
\section{Appendix}
In this section we collect auxiliary mathematical results which might be useful by themselves and are not particularly tight to the topic of the paper. First, we gather properties of a specific second order ODE, its explicit solution under the boundary conditions is given at the beginning of the proof.
\begin{prop}\label{p:ODE}
Let $U:\mathbb R_+\rightarrow [0,\infty)$ be a convex, decreasing and twice continuously differentiable function such that $U$ vanishes at infinity. Let $b,\lambda>0$, $\delta>-\lambda$ and $V$ be the unique solution to the differential equation
    $$ \frac{\sigma^2}{2}V''(x) + \mu V'(x) -(\lambda+\delta)V(x)+\lambda U(x) = 0,\quad x\in[0,\infty)$$
  with $V'(b) = -1$ and $\lim\limits_{x\rightarrow \infty} V(x) = 0$. Then, $V$ is strictly positive valued on $[b,\infty)$, four times continuously differentiable and
\begin{align}
V(b) = \frac 1{A}\Big(1+\frac{2\lambda}{\sigma^2}\int_b^\infty U(y)e^{\tilde A (b-y)}\md y \Big)\label{eq:boundary}
\end{align}
 where 
\begin{align*}
&\psi:=\sqrt{\mu^2+2\sigma^2(\delta+\lambda)} >\mu>0\;,
\\& A=\frac{\mu+\psi}{\sigma^2} \quad\mbox{and}\quad \tilde A:=\frac{-\mu+\psi}{\sigma^2}\;.
\end{align*}
Moreover, $V'$ and $V''$ vanish at infinity. Also, the $J:=\{x\in\mathbb R_+:V''(x)<0\}$ is empty or an interval containing zero and we have $V''\geq 0>V'$ outside of $J$. If $\delta\geq0$ and $b=0$, then $J=\emptyset$.
\end{prop}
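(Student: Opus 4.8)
The plan is to exploit the factorisation of the operator. With $\psi,A,\tilde A$ as defined, the characteristic equation $\frac{\sigma^2}{2}r^2+\mu r-(\lambda+\delta)=0$ has roots $\tilde A>0$ and $-A<0$ (it is exactly $\delta>-\lambda$ that makes $\psi$ real with $\psi>\mu$), so $\frac{\sigma^2}{2}V''+\mu V'-(\lambda+\delta)V=\frac{\sigma^2}{2}(\frac{d}{dx}-\tilde A)(\frac{d}{dx}+A)V$. Setting $W:=V'+AV$, the equation becomes $W'-\tilde A W=-\frac{2\lambda}{\sigma^2}U$; since $V$ is bounded, $W$ must be bounded as well (otherwise $W\sim c\,e^{\tilde A x}$ with $c\neq0$, forcing $V'=W-AV$, hence $V$, to blow up), and the only bounded solution is
\[W(x)=\frac{2\lambda}{\sigma^2}\int_x^\infty e^{\tilde A(x-t)}U(t)\md t\ \ge\ 0.\]
This one identity gives most of the assertions at once: $W(x)\to0$ by dominated convergence, hence $V'=W-AV\to0$ and then $V''=\frac{2}{\sigma^2}((\lambda+\delta)V-\mu V'-\lambda U)\to0$; $(e^{Ax}V)'=e^{Ax}W\ge0$, so $e^{Ax}V$ is non-decreasing; and evaluating $W$ at $b$ with $V'(b)=-1$ gives $AV(b)-1=W(b)$, which is precisely \eqref{eq:boundary}, whence $V(b)\ge\frac1A>0$ and, by the monotonicity of $e^{Ax}V$, $V>0$ on $[b,\infty)$. (Existence and uniqueness of $V$ is routine: the bounded solutions form the one-parameter family $Ce^{-Ax}+V_p$, all of which vanish at infinity, and $V'(b)=-1$ fixes $C$.) Fourfold differentiability is the usual bootstrap from $V''=\frac{2}{\sigma^2}((\lambda+\delta)V-\mu V'-\lambda U)$: $V\in\mathcal C^2$, hence the right-hand side is $\mathcal C^1$ and $V\in\mathcal C^3$, and one more round using $U\in\mathcal C^2$ gives $V\in\mathcal C^4$.

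The crux for the shape of $J$ is to differentiate the equation twice: $p:=V''$ solves the \emph{same} equation, now with forcing $-\lambda U''$, and $U''\ge0$ by convexity of $U$. Since $p$ vanishes at infinity, the step above applied to $p$ instead of $V$ gives $p'+Ap=\frac{2\lambda}{\sigma^2}\int_x^\infty e^{\tilde A(x-t)}U''(t)\md t\ge0$, i.e.\ $e^{Ax}V''$ is non-decreasing. Consequently $J=\{V''<0\}=\{e^{Ax}V''<0\}$ is an initial segment of $\mathbb R_+$ — empty or of the form $[0,a)$ — and it cannot be all of $\mathbb R_+$, since $V''<0$ on $[b,\infty)$ would give $V'\le V'(b)=-1$ there and hence $V\to-\infty$, contradicting boundedness. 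For $V''\ge0>V'$ off $J$: if $V''(x_0)\ge0$ and $V'(x_0)\ge0$ for some $x_0$, then $V''\ge0$ on $[x_0,\infty)$, so $V'$ is non-decreasing and $\ge0$ there, so $V$ is non-decreasing on $[x_0,\infty)$; but $V\to0$, so $V\le0$ on $[x_0,\infty)$, contradicting $V>0$ on $[b,\infty)$ (take $x=\max\{x_0,b\}$).

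Finally, for $\delta\ge0$ and $b=0$ it suffices to show $V''(0)\ge0$, as then $J^c=\mathbb R_+$. Evaluating the once-differentiated equation at $0$ with $V'(0)=V'(b)=-1$ gives $\frac{\sigma^2}{2}V'''(0)=-\mu V''(0)-(\lambda+\delta)-\lambda U'(0)$, while the relation for $p=V''$ at $0$ reads $V'''(0)+AV''(0)\ge0$; eliminating $V'''(0)$ and using $A-\frac{2\mu}{\sigma^2}=\tilde A>0$ yields $\tilde A\,V''(0)\ge\frac{2}{\sigma^2}((\lambda+\delta)+\lambda U'(0))$, which is $\ge0$ once $\delta\ge0$ and $U'(0)\ge-1$. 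This is the delicate point: the bound $U'\ge-1$ is not listed among the hypotheses but holds in every use of the proposition in this paper (there $U$ is a value function, affine with slope $-1$ on an initial interval), and it cannot simply be dropped — e.g.\ $U(t)=Me^{-t}$ with $M$ large gives $V''(0)<0$. I would therefore add this hypothesis to the statement. The genuinely non-obvious move in the whole argument is the factorisation reducing the equation to the first-order relation for $W$, together with the observation that two further differentiations feed the convexity of $U$ into exactly that relation for $V''$.
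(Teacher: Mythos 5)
Your proof is correct, and it takes a genuinely different route from the paper's. The paper writes out the explicit solution by variation of parameters, reads off \eqref{eq:boundary} and strict positivity from that formula, and then controls $J$ via the auxiliary function $F:=(\lambda+\delta)V'-\lambda U'$: differentiating the ODE once gives $\frac{\sigma^2}{2}V'''=F-\mu V''$, $F$ is strictly decreasing on each component $I$ of $J$, and a contradiction argument shows that any component with $\inf(I)>0$ would be unbounded, forcing $V\rightarrow-\infty$. Your factorisation $\frac{\sigma^2}{2}\left(\frac{d}{dx}-\tilde A\right)\left(\frac{d}{dx}+A\right)$ replaces both steps: the single identity $V'+AV=W\geq 0$ delivers \eqref{eq:boundary}, positivity on $[b,\infty)$ and the decay of $V'$, $V''$ at once, and feeding the twice-differentiated equation through the same reduction gives the global monotonicity of $e^{Ax}V''$, from which the interval structure of $J$ and the sign of $V'$ outside $J$ are immediate rather than extracted by a componentwise contradiction. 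This is cleaner and makes the role of the convexity of $U$ completely transparent.

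Your remark on the final claim is also correct and worth recording. The paper's own proof of the case $\delta\geq 0$, $b=0$ asserts $F(0)=-\delta$, which uses $V'(0)=-1$ together with the \emph{unstated} identity $U'(0)=-1$; some bound of the form $U'(0)\geq -1$ is genuinely needed, and your counterexample $U(t)=Me^{-t}$ is valid: from the ODE at $0$ and \eqref{eq:boundary} one gets $\frac{\sigma^2}{2}V''(0)=\mu+\frac{\lambda+\delta}{A}-\frac{\lambda M}{1+\tilde A}$, which is negative for large $M$. In every application in the paper $U$ is one of the functions $V_n$, which has slope $-1$ at the origin, so the implicit hypothesis holds where the proposition is used, but it should indeed be added to the statement (which, incidentally, also assumes $b>0$ and then discusses $b=0$).
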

\begin{proof}
 We have
 \begin{align*}
&V(x)=C e^{-A (x-b)}+\frac{e^{-A x}}{\psi}\int_b^{x} \lambda U(y)e^{A y}\md y+\frac{e^{\tilde A x}}{\psi}\int_{x}^\infty \lambda U(y)e^{-\tilde A y}\md y\;,
\\&C:=\frac 1{A}\bigg(1+\frac{\tilde A}\psi\int_b^\infty \lambda U(y)e^{\tilde A (b-y)}\md y \bigg)
%\left(\frac{2}{\sigma^2}-A/\psi\right)\int_b^\infty \lambda U(y)e^{(A-2\mu/\sigma^2) (b-y)}\md y \Big)\;.
%\\& A = \frac{\mu+\psi}{\sigma^2}>0\quad\quad \tilde A:=\frac{-\mu+\psi}{\sigma^2}
\end{align*}
for any $x\geq 0$. Since we have $\tilde A>0$, it holds $C>0$. Observe that 
$$V(b) = \frac 1{A}\Big(1+\frac{2\lambda}{\sigma^2}\int_b^\infty U(y)e^{\tilde A (b-y)}\md y \Big)$$
 as required. Consequently, $V(x)>0$ for any $x\in[b,\infty)$. Also $V$ is four times continuously differentiable. Clearly, $V$ and $V'$ vanish at infinity. Inspecting the differential equation yields that $V''$ vanishes at infinity.
 
 $J$ is an open set in $\mathbb R_+$ and, hence, countable union of disjoint open intervals. Let $I\subseteq J$ be one of those open intervals. Define $F(x) := (\lambda+\delta)V'-\lambda U'$ and by taking the derivative on the differential equation we get
  $$ \frac{\sigma^2}{2}V'''(x) = F(x) - \mu V''(x). $$
 $F$ is strictly decreasing on $I$ because $U$ is convex and $V''<0$ on $I$. 
 
 Assume by contradiction that $I$ is non-empty and $a:=\inf(I)>0$. Then $F(a)=\frac{2}{\sigma^2}V'''(a)\leq 0$ and, hence, we have
  $$ \frac{\sigma^2}{2}V'''(x) = F(x) - \mu V''(x) < F(a) - \mu V''(x)\leq -\mu V''(x),\quad x\in I$$
and, hence, $V''$ is strictly decreasing in its zeros of $\overline I$ which implies that $I$ is unbounded and $\lim_{x\rightarrow\infty}V(x)=-\infty$. A contradiction.

 Thus, either $J=\emptyset$ or $0\in J=I$. Also, $J$ is bounded because otherwise $V''<0$ everywhere and, hence, $V'\leq -1$ on $[b,\infty)$ which would imply that $\lim_{x\rightarrow\infty}V(x)=-\infty$. Thus $J$ has the desired structure. Moreover, since $V''\geq 0$ outside $J$ we get $V'$ is increasing outside $J$ and, hence, $V'\leq 0$ outside $J$.
 
 Now assume by contradiction that there is $x\geq\sup(J)$ with $V'(x)=0$. Since $V'$ is increasing and non-positive outside $J$ we get $V'(y) = 0$ for any $y\geq x$ and, hence, $V''(y)=0=V'''(y)$ for any $y\geq x$. Thus, $F(y)=0$ for any $y\geq x$ which implies $U'(y)=0$ for any $y\geq x$. Thus, $U(y)=0$ for any $y\geq x$. Hence, $V(y) = Ce^{-A(y-b)}$ for $y\geq x$ which is a contradiction to $V'(x)=0$.
 
 Consequently, $V'(x)<0$ for any $x\geq\sup(J)$. 
 
% Now assume by contradiction that there is $x> b$ such that $V''(x)=0$. Since $V''$ is non-negative on $(b,\infty)$ and a twice continuously differentiable function we get $V'''(x) = 0$ and, hence, we have
%  $$ \frac{\sigma^2}{2}V''''(x) = -\lambda U''(x) < 0$$
% which yields that $V''$ is negative on $(x,x+\epsilon)$ for some $\epsilon>0$. A contradiction. \?{Something wrong here. We don't have U''>0}
 
 Now assume that $\delta\geq 0$, $b=0$ and assume by contradiction that $J\neq \emptyset$. Then $F(0) = -\delta < 0$. Since $F$ is strictly decreasing on $J$ we get $V'''(x) = F(x) -\mu V''(x) < F(0)-\mu V''(x) \leq -\mu V''(x)$. Again, this implies that $V''$ is decreasing around its zeros and, hence $J=\mathbb R_+$. A contradiction.
\end{proof}

\begin{cor}\label{c:2nd.abl.wachs}
  Let $U:\mathbb R_+\rightarrow [0,\infty)$ be a convex, decreasing and twice continuously differentiable function such that $U$ vanishes at infinity. Let $\lambda>0$, $\delta>-\lambda$ and $V_b$ be the unique solution to the differential equation
    $$ \frac{\sigma^2}{2}V_b''(x) + \mu V_b'(x) -(\lambda+\delta)V_b(x)+\lambda U(x) = 0,\quad x\in[0,\infty)$$
  with $V_b'(b) = -1$ and $\lim\limits_{x\rightarrow \infty} V(x) = 0$ and denote $g(b) := V''_b(b)$ for any $b\geq 0$.
  
  Then $g$ is an increasing function.
\end{cor}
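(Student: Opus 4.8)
The plan is to derive a closed form for $g$ and then differentiate it in $b$. First I would evaluate the differential equation at the point $x=b$: since $V_b'(b)=-1$, it gives at once
\[
\frac{\sigma^2}{2}g(b)=\frac{\sigma^2}{2}V_b''(b)=\mu+(\lambda+\delta)V_b(b)-\lambda U(b),
\]
so everything is reduced to understanding the map $b\mapsto V_b(b)$. For this I would apply Proposition \ref{p:ODE} to $V_b$, whose formula \eqref{eq:boundary}, after the substitution $z=y-b$, reads
\[
V_b(b)=\frac1A\Bigl(1+\frac{2\lambda}{\sigma^2}h(b)\Bigr),\qquad h(b):=\int_0^\infty U(b+z)e^{-\tilde A z}\,\md z,
\]
with $A$, $\tilde A$ and $\psi$ as in Proposition \ref{p:ODE}. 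Substituting this back expresses $g$ explicitly through $h(b)$ and $U(b)$.

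Next I would differentiate. The hypotheses on $U$ (convex, decreasing, $\mathcal C^2$, vanishing at infinity) force $U'$ to be nonpositive, nondecreasing, and to tend to $0$ at infinity, so $U'$ and $U''$ are integrable over half-lines; this legitimises both differentiation under the integral sign and the integrations by parts below. One obtains $h'(b)=\int_0^\infty U'(b+z)e^{-\tilde A z}\,\md z$ and hence
\[
g'(b)=\frac{2\lambda}{\sigma^2}\Bigl(\frac{2(\lambda+\delta)}{A\sigma^2}\,h'(b)-U'(b)\Bigr).
\]

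The decisive observation will be the algebraic identity $2(\lambda+\delta)/(A\sigma^2)=\tilde A$, which holds because $A\sigma^2=\mu+\psi$ and $\tilde A(\mu+\psi)=\psi^2-\mu^2=2\sigma^2(\delta+\lambda)$ by the definition of $\psi$. With it, $g'(b)=\frac{2\lambda}{\sigma^2}\bigl(\tilde A h'(b)-U'(b)\bigr)$, and integrating $h'(b)$ by parts once more (the boundary term at infinity vanishing) gives $\tilde A h'(b)=U'(b)+\int_0^\infty U''(b+z)e^{-\tilde A z}\,\md z$, so that
\[
g'(b)=\frac{2\lambda}{\sigma^2}\int_0^\infty U''(b+z)e^{-\tilde A z}\,\md z\;\geq\;0
\]
by convexity of $U$. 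Hence $g$ is increasing.

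The main obstacle is really just spotting the cancellation $2(\lambda+\delta)/(A\sigma^2)=\tilde A$ and arranging the two integrations by parts so that the boundary terms are controlled; after that the sign is immediate, and it is worth noting that it is only the convexity of $U$ (not its monotonicity) which ultimately drives it. An alternative route would be to compare $V_{b_1}$ and $V_{b_2}$ directly --- their difference solves the homogeneous equation and, vanishing at infinity, is a multiple of $e^{-Ax}$ --- but converting this into a comparison of $g(b_1)$ and $g(b_2)$ still requires controlling $V_{b_2}'$ on $[b_1,b_2]$, so the direct computation above seems the shorter path.
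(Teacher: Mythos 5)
Your proof is correct and follows essentially the same route as the paper: evaluate the ODE at $x=b$, substitute the closed form \eqref{eq:boundary} for $V_b(b)$, differentiate under the integral, and exploit the identity $A\tilde A\sigma^2=2(\lambda+\delta)$. The only (cosmetic) difference is in the last step, where the paper bounds $U'(b+z)\ge U'(b)$ pointwise instead of integrating by parts to reach $\int_0^\infty U''(b+z)e^{-\tilde A z}\,\md z$; both arguments rest on the convexity of $U$.
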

\begin{proof}
  We have
  \begin{align*}
    g(b) &= -\frac{2}{\sigma^2}{2}\left(\mu V_b'(b)-(\lambda+\delta)V_b(b)+\lambda U(b)\right) \\
       &= \frac{2\mu}{\sigma^2}+\frac{2\lambda}{\sigma^2}\left(\frac{\lambda+\delta}{\lambda}V_b(b)-U(b)\right) \\
       &= \frac{2\mu}{\sigma^2}+\frac{2\lambda}{\sigma^2}\left(\frac{\lambda+\delta}{A\lambda}\Big(1+\frac{2\lambda}{\sigma^2}\int_b^\infty U(y)e^{\tilde A (b-y)}\md y \Big)-U(b)\right) \\
       &= \frac{2\mu}{\sigma^2}+\frac{2(\lambda+\delta)}{\sigma^2 A}+\frac{2\lambda}{\sigma^2}\left(\frac{2(\lambda+\delta)}{A\sigma^2}\int_0^\infty U(z+b)e^{-z\tilde A }\md y -U(b)\right)
  \end{align*}
  where the third equality is yielded by Proposition \ref{p:ODE} with $A$ and $\tilde A$ given there. Apparently, $g$ is continuously differentiable and we have
  \begin{align*}
    g'(b) &= \frac{2\lambda}{\sigma^2}\left(\frac{2(\lambda+\delta)}{A\sigma^2}\int_0^\infty U'(z+b)e^{-z\tilde A }\md y -U'(b)\right) \\
        &\geq \frac{2\lambda}{\sigma^2}\left(\frac{2(\lambda+\delta)}{A\sigma^2}\int_0^\infty U'(b)e^{-z\tilde A }\md y -U'(b)\right) \\
        &= \frac{2\lambda}{\sigma^2}U'(b)\left(\frac{2(\lambda+\delta)}{A\tilde A\sigma^2} -1\right) = 0
  \end{align*}
  because $A\tilde A = \frac{1}{\sigma^4}(\psi^2-\mu^2) = \frac{2(\lambda+\delta)}{\sigma^2}$. Consequently, $g$ is an increasing function as claimed.
\end{proof}

Sequences of convex $\mathcal C^2$-functions which converge pointwise have very nice convergence behaviour. This observation is our key ingredient for our main result Theorem \ref{t:convergence result} below.
\begin{prop}\label{p:convergence criterion}
Let $(U_n)_{n\in\mathbb N}$ be a sequence of convex $\mathcal C^2$-functions from $\mathbb R_+$ to $\mathbb R$ which converges pointwise to some function $U:\R_+\to \R$ and such that there is $K>0$ with $U_n''(x)\leq K$ for any $x\geq 0$, $n\in\mathbb N$ and assume that $(U_n'(0))_{n\in\mathbb N}$ converges to some $u\in\mathbb R$.
\\Then $U$ is a convex $\mathcal C^1$-function and $U'$ is Lipschitz-continuous with Lipschitz-constant at most $K$. Additionally, $U_n,U_n'$ converge locally uniformly to $U$ resp.\ $U'$.
\end{prop}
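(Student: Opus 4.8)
The plan is to squeeze out of the two bounds $0\le U_n''\le K$ first the pointwise convergence of the derivatives $U_n'$, and then to upgrade everything to local uniform convergence by an equicontinuity argument.

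\emph{Step 1 (elementary consequences of the bounds).} From $U_n''\ge 0$ each $U_n'$ is nondecreasing, and integrating $U_n''\le K$ gives $0\le U_n'(y)-U_n'(x)\le K(y-x)$ for $0\le x\le y$; hence the family $(U_n')_{n\in\N}$ is equi-Lipschitz with constant $K$. Together with $U_n'(0)\to u$ and monotonicity this makes $(U_n')_{n\in\N}$ uniformly bounded on every compact $[0,R]$. Moreover, since the convexity inequality passes to pointwise limits, $U$ is convex, so its one-sided derivatives $D^{\pm}U(x)$ exist and $0\le D^{+}U(x)-D^{-}U(x)$.

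\emph{Step 2 (differentiability of the limit and pointwise convergence of $U_n'$).} Writing $U_n(x)=U_n(0)+\int_0^x U_n'(t)\md t$ and using the equi-Lipschitz bound, for $x>0$ and $0<h<x$ one has
\[
\frac{U_n(x+h)-U_n(x)}{h}-\frac{U_n(x)-U_n(x-h)}{h}=\frac{1}{h}\int_0^h\big(U_n'(x+t)-U_n'(x-h+t)\big)\md t\in[0,Kh].
\]
Letting $n\to\infty$ (pointwise convergence of $U_n$) and then $h\to 0^{+}$ forces $D^{+}U(x)-D^{-}U(x)=0$, i.e.\ $U$ is differentiable at $x$ (at $x=0$ the same argument with one-sided difference quotients applies). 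Then, from the convexity sandwich $\frac{U_n(x)-U_n(x-h)}{h}\le U_n'(x)\le\frac{U_n(x+h)-U_n(x)}{h}$, passing to the limit in $n$ and afterwards $h\to 0$ yields $U_n'(x)\to U'(x)$ for every $x\ge 0$.

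\emph{Step 3 (conclusion).} Passing to the limit in $0\le U_n'(y)-U_n'(x)\le K(y-x)$ shows $U'$ is nondecreasing, so $U$ is convex, and $K$-Lipschitz, in particular continuous, hence $U\in\mathcal C^1$ with Lipschitz constant at most $K$ for $U'$. Finally, $(U_n')_{n\in\N}$ is equicontinuous and converges pointwise, hence converges locally uniformly (the standard ``equicontinuous $+$ pointwise $\Rightarrow$ locally uniform'' lemma on compacts); plugging this together with $U_n(0)\to U(0)$ into $U_n(x)=U_n(0)+\int_0^x U_n'(t)\md t$ gives local uniform convergence of $U_n$ to $U$ as well.

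\emph{Main obstacle.} The only genuinely delicate point is the differentiability of the limit in Step 2: without the upper bound $U_n''\le K$ the $U_n$ could approximate a function with a corner, and this bound is used exactly to annihilate the gap $D^{+}U-D^{-}U$ through the telescoped difference-quotient estimate. Everything else is bookkeeping with convexity inequalities, the fundamental theorem of calculus, and the equicontinuity argument.
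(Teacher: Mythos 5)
Your proof is correct and follows essentially the same route as the paper: both arguments rest on the equi-Lipschitz bound $0\le U_n'(y)-U_n'(x)\le K(y-x)$ and on sandwiching $U_n'$ between difference quotients to get pointwise, then locally uniform, convergence of the derivatives. The only organizational differences are that you establish two-sided differentiability of $U$ explicitly by killing the gap $D^{+}U-D^{-}U$ (the paper works with the right-derivative of the convex limit and shows it is Lipschitz, which amounts to the same telescoping estimate), and your final step---locally uniform convergence of $U_n$ via $U_n(x)=U_n(0)+\int_0^x U_n'(t)\,\mathrm{d}t$---is if anything cleaner than the paper's appeal to dominated convergence.
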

\begin{proof}
  Let $t\in[0,1]$ and $x,y\geq 0$. Then we have
   \begin{align*}
      U(tx+(1-t)y) &= \lim_{n\rightarrow\infty}U_n(tx+(1-t)y) \\
                   &\leq \lim_{n\rightarrow\infty} tU_n(x)+(1-t)U_n(y) \\
                   & = tU(x)+(1-t)U(y).
   \end{align*}    
   Thus, $U$ is convex. In particular, $U$ admits a right-derivative on $(0,\infty)$ denoted by $U'$.
   
   Also, we have
   \begin{align*}
      |U'(x) - U'(y)| &= \lim_{h\searrow 0} \left|\frac1h U(x+h)-U(x)-U(y+h)+U(h) \right| \\
       &= \lim_{h\searrow 0}\lim_{n\rightarrow\infty} \left|\frac1h U_n(x+h)-U_n(x)-U_n(y+h)+U_n(h) \right| \\
       &= \lim_{h\searrow 0}\lim_{n\rightarrow\infty} \left|\frac1h \int_0^h U_n'(x+z)-U'_n(y+z)dz \right| \\
       &\leq K|x-y|.
   \end{align*}
Thus, $U'$ is Lipschitz-continuous with constant $K$ on $(0,\infty)$. Consequently, $U$ is $\mathcal C^1$ on $(0,\infty)$ with derivative $U'$. For $h\in(0,x)$ we have
\begin{align*}
  U_n'(x) &\leq \frac1h \int_{x}^{x+h}U_n'(y)\md y 
          = \frac{U_n(x+h)-U_n(x)}{h} 
          \underset{n\rightarrow\infty}\rightarrow \frac{U(x+h)-U(x)}{h}
\end{align*}
and since this is true for any $h$ we get $\limsup\limits_{n\rightarrow\infty}U_n'(x)\leq U'(x)$. Also, we have
\begin{align*}
  U_n'(x) &\geq \frac1h \int_{x-h}^{x}U_n'(y)\md y 
          = \frac{U_n(x-h)-U_n(x)}{h} 
          \underset{n\rightarrow\infty}\rightarrow \frac{U(x-h)-U(x)}{h}
\end{align*}
 and, hence, $\liminf\limits_{n\rightarrow\infty}U_n'(x)\ge U'(x)$. Consequently, $U_n'(x)\rightarrow U'(x)$ for any $x>0$.
 
 We have 
 \begin{align*}
   |U_n'(x)| \leq Kx + |U_n'(0)|
             \leq Kx + \sup_{n\in\mathbb N}U_n'(0)
 \end{align*}
 and, hence, the dominated convergence theorem yields that $U_n \rightarrow U$ locally uniformly.
 
 Now, let $K>0$ and $\epsilon >0$. We will show that there is $N\in\N$ such that that $\sup\limits_{n\geq N}\sup\limits_{x\in[0,K]} |U_n'(x)-U'(x)|\leq \epsilon$ which yields that $U_n'\rightarrow U'$ locally uniformly.
 
 For that choose $N\in\mathbb N$ such that 
  $$\sup_{n\geq N}\sup_{x\in[0,2K]} |U_n(x)-U(x)|\leq 1\wedge \left(\frac{\epsilon}{2K+2}\right)^2=:\delta.$$ 
 Then we have for $n\geq N$ and $x\in[0,K]$ with $h:=\sqrt{\delta}$
  \begin{align*}
    |U_n'(x)-U'(x)| &\leq \frac1h \int_x^{x+h}U_n'(y)dy - U_n'(x) + \left|\frac1h \int_x^{x+h}U_n'(y)dy -U'(x)\right | \\
     &\leq Kh + \left| \frac{U_n(x+h)-U_n(x)}{h} -U'(x)\right| \\
     &\leq Kh + 2\delta/h + \left|\frac{U(x+h)-U(x)}{h} -U'(x)\right| \\
     &\leq 2Kh + 2\delta/h \\
     &= \epsilon\wedge (2K+2) \leq \epsilon.
  \end{align*}
  Since the estimate is independent of $n$ and $x$ we get the required convergence.
\end{proof}

%\bibliographystyle{alpha}
%\bibliography{bib.bib}

\begin{thebibliography}{99}
\footnotesize
%
\bibitem{asmus} Asmussen, S.: Risk theory in a Markovian environment. Scandinavian Actuarial Journal 2, 69--100, 1989.
%
\bibitem{bae} B\"aeuerle, N.: Some results about the expected ruin time in Markov-modulated risk models. Insurance:  Mathematics and Economics 18, 119--127, 1996.
%
\bibitem{bs} Borodin, A. N. and Salminen, P.: Handbook of Brownian Motion - Facts and Formulae. Birkh\"auser Verlag, Basel, 2002.
%
\bibitem{bl} Boyarchenko, S. and Levendorskii, S.: Exit problems in regime-switching models. Journal of Mathematical Economics 44, 180--206, 2008.
%
\bibitem{dick1} Dickson, D.C.M. and Waters, H.R.: Some optimal dividends problems. ASTIN Bulletin 34, 49--74, 2004.
%
\bibitem{duan} Duan, J.C., Popova, I. and Ritchken, P.: Option pricing under regime switching. Quantitative Finance 2, 1--17, 2002.
%\bibitem{EH} Elliott, R.J. and van der Hoek, J.: An application of hidden Markov models to asset allocation problems. Finance and Stochastics 1(3), 229-238, 1997.
%
\bibitem{es} Eisenberg, J. and Schmidli, H.: Optimal control of capital injections by reinsurance in a diffusion approximation. Bl\"atter DGVFM 30, 1--13, 2009.
%
\bibitem{jp} Jiang, Z. and Pistorius, M.R.: On perpetual American put valuation and first passage in a regime-switching model with jumps. Finance \& Stochastics 12, 331--355, 2008.
%
\bibitem{jp2} Jiang, Z. and Pistorius, M.R.: Optimal dividend distribution under Markov regime switching. Finance \& Stochastics 16, 449--476, 2012.
%
\bibitem{nie} Nie, C., Dickson, D.C.M. and Li, S.: Minimizing the ruin probability through capital injections, Annals of Actuarial Science 5(2), 195--209, 2011.
%
\bibitem{paf} Pafumi, G.: On the  time  value  of ruin: Discussion. North American Actuarial Journal 2, 1, 75--76, 1998.
%
\bibitem{pedler.71} Pedler, P.J.: Occupation Times for Two State Markov Chains. Journal of Applied Probability 8(2), 381--390, 1971. 
%
\bibitem{HS} Schmidli, H.: Stochastic Control in Insurance. Springer, London, 2008.
%
\bibitem{shreve} Shreve, S.E., Lehotzky, J.P. and Gaver D.P.:
Optimal consumption for general diffusions with absorbing and reflecting barriers.
SIAM J.\ Control Optim. 22(1), 55--75, 1984.
%
\bibitem{walter.98} Walter, W.: Ordinary Differential Equations. Springer-Verlag, New York, 1998.
%
\bibitem{zhu} Zhu, J. and Yang, H.: Ruin theory for a Markov regime-switching model under a
threshold dividend strategy. Insurance: Mathematics and Economics 42, 311--318, 2008.
%
\bibitem{gard} Website of The Guardian: https://www.theguardian.com/business/2016/apr/18/the-problem-with-negative-interest-rates.
%
\bibitem{ecb} Website of the European Central Bank: https://www.ecb.europa.eu
%
\end{thebibliography}

\end{document}